\def\eqref#1{equation~\ref{#1}}
\def\1{\bm{1}}
\DeclareMathAlphabet{\mathsfit}{\encodingdefault}{\sfdefault}{m}{sl}
\SetMathAlphabet{\mathsfit}{bold}{\encodingdefault}{\sfdefault}{bx}{n}
\DeclareMathOperator*{\argmax}{arg\,max}
\newcommand*{\tikzmk}[1]{\tikz[remember picture,overlay,] \node (#1) {};\ignorespaces}
\newcommand{\boxit}[1]{\tikz[remember picture,overlay]{\node[yshift=3pt,fill=#1,opacity=0.1,fit={(A)($(B)+(\linewidth,\baselineskip)$)}] {};}\ignorespaces}
\newcommand{\boxitex}[1]{\tikz[remember picture,overlay]{\node[yshift=3pt,fill=#1,opacity=0.1,fit={(A)($(B)+(\linewidth,0*\baselineskip)$)}] {};}\ignorespaces}
\title{Equivariant Scalar Fields for Molecular\\Docking with Fast Fourier Transforms}
\author{%
  Bowen Jing$^1$, Tommi Jaakkola$^1$, Bonnie Berger$^{1\,2}$\\
  $^1$CSAIL, Massachusetts Institute of Technology\\
  $^2$Dept. of Mathematics, Massachusetts Institute of Technology\\
  \texttt{bjing@mit.edu, \{tommi, bab\}@csail.mit.edu}
}
\newcommand{\bfx}{\mathbf{x}}
\newcommand{\bfX}{\mathbf{X}}
\newcommand{\bfk}{\mathbf{k}}
\newtheorem{proposition}{Proposition}
\newtheorem{prop}{Proposition}
\begin{document}

\maketitle

\begin{abstract}
Molecular docking is critical to structure-based virtual screening, yet the throughput of such workflows is limited by the expensive optimization of scoring functions involved in most docking algorithms. We explore how machine learning can accelerate this process by learning a scoring function with a functional form that allows for more rapid optimization. Specifically, we define the scoring function to be the cross-correlation of multi-channel ligand and protein scalar fields parameterized by equivariant graph neural networks, enabling rapid optimization over rigid-body degrees of freedom with fast Fourier transforms. The runtime of our approach can be amortized at several levels of abstraction, and is particularly favorable for virtual screening settings with a common binding pocket. We benchmark our scoring functions on two simplified docking-related tasks: decoy pose scoring and rigid conformer docking. Our method attains similar but faster performance on crystal structures compared to the widely-used Vina and Gnina scoring functions, and is more robust on computationally predicted structures. Code is available at \url{https://github.com/bjing2016/scalar-fields}.
\end{abstract}

\section{Introduction}

Proteins are the macromolecular machines that drive almost all biological processes, and much of early-stage drug discovery focuses on finding molecules which bind to and modulate their activity. \emph{Molecular docking}---the computational task of predicting the binding pose of a small molecule to a protein target---is an important step in this pipeline. Traditionally, molecular docking has been formulated as an optimization problem over a \emph{scoring function} designed to be a computational proxy for the free energy \citep{torres2019key, fan2019progress}. Such scoring functions are typically a sum of pairwise interaction terms between atoms with physically-inspired functional forms and empirically tuned weights \citep{quiroga2016vinardo}. While these terms are simple and hence fast to evaluate, exhaustive sampling or optimization over the space of ligand poses is difficult and leads to the significant runtime of docking software.

ML-based scoring functions for docking have been an active area of research, ranging in sophistication from random forests to deep neural networks \citep{yang2022protein, crampon2022machine}. These efforts have largely sought to more accurately model the free energy based on a docked pose, which is important for downstream identification of binders versus non-binders (\emph{virtual screening}). However, they have not addressed nor reduced the computational cost required to produce these poses in the first place. Hence, independently of the accuracy of these workflows, molecular docking for large-scale structure-based virtual screening remains computationally challenging, especially with the growing availability of large billion-compound databases such as ZINC \citep{tingle2023zinc}. 

In this work, we explore a different paradigm and motivation for machine learning scoring functions, with the specific aim of \emph{accelerating scoring and optimization} of ligand poses for high-throughput molecular docking. To do so, we forego the physics-inspired functional form of commonly used scoring functions, and instead frame the problem as that of learning \emph{scalar fields} independently associated with the 3D structure of the protein and ligand, respectively. We then define the score to be the {cross-correlation} between the overlapping scalar fields when oriented according to the ligand pose. While seemingly more complex than existing scoring functions, these cross-correlations can be rapidly evaluated over a large number of ligand poses simultaneously using Fast Fourier Transforms (FFT) over both the translational space $\mathbb{R}^3$ and the rotational space $SO(3)$. This property allows for significant speedups in the optimization over these degrees of freedom.

Scalar fields representing molecules in 3D space have been previously parameterized as \emph{neural fields} \citep{zhong2019reconstructing}, i.e., neural networks taking coordinates of a query point as input and producing field values as output. However, our scalar fields must be defined relative to the ligand (or protein) structure---represented as a graph embedded in 3D space---and be $SE(3)$ \emph{equivariant} in order to yield an invariant scoring function. To satisfy these requirements, we introduce \emph{equivariant scalar field} networks (ESFs), which parameterize the scalar fields with message passing E3NNs \citep{thomas2018tensor, geiger2022e3nn}. Unlike neural fields, these networks yield a compact representation of the scalar field in a single forward pass and are automatically $SE(3)$-equivariant.

Contrasting with existing ML scoring functions, the computational cost of our method can be \emph{amortized} at several levels of abstraction, significantly accelerating runtimes for optimized workflows. For example, unlike methods that require one neural network forward pass per pose, our network is evaluated once per protein structure or ligand conformer \emph{independently}. Post-amortization, we attain translational and rotational optimization runtimes as fast as 160 $\mu$s and 650 $\mu$s, respectively, with FFTs. Such throughputs, when combined with effective sampling and optimization, could make docking of very large compound libraries feasible with only modest resources.

Empirically, we evaluate our method on two simplified docking-related tasks: (1) decoy pose scoring and (2) rigid conformer docking. On both tasks, our scoring function is competitive with---but faster than---the scoring functions of Gnina \citep{ragoza2017protein, mcnutt2021gnina} and Vina \citep{trott2010autodock} on PDBBind crystal structures and is significantly better on ESMFold structures. We then demonstrate the further advantages of runtime amortization on the virtual screening-like setup of the PDE10A test set \citep{tosstorff2022high}, where---since there is only one unique protein structure---our method obtains a 50x speedup in total inference time at no loss of accuracy.

To summarize, our contributions are:
\begin{itemize}
    \item We are the first to propose \emph{learning} protein-ligand scoring functions based on cross-correlations of scalar fields in order to accelerate pose optimization in molecular docking.
    \item We formulate a neural network parameterization and training procedure for learning \emph{equivariant scalar fields} for molecules.
    \item We demonstrate that the performance and runtime of our scoring function holds promise when evaluated on docking-related tasks.
\end{itemize}

\begin{figure}
    \centering
    \includegraphics[width=\textwidth]{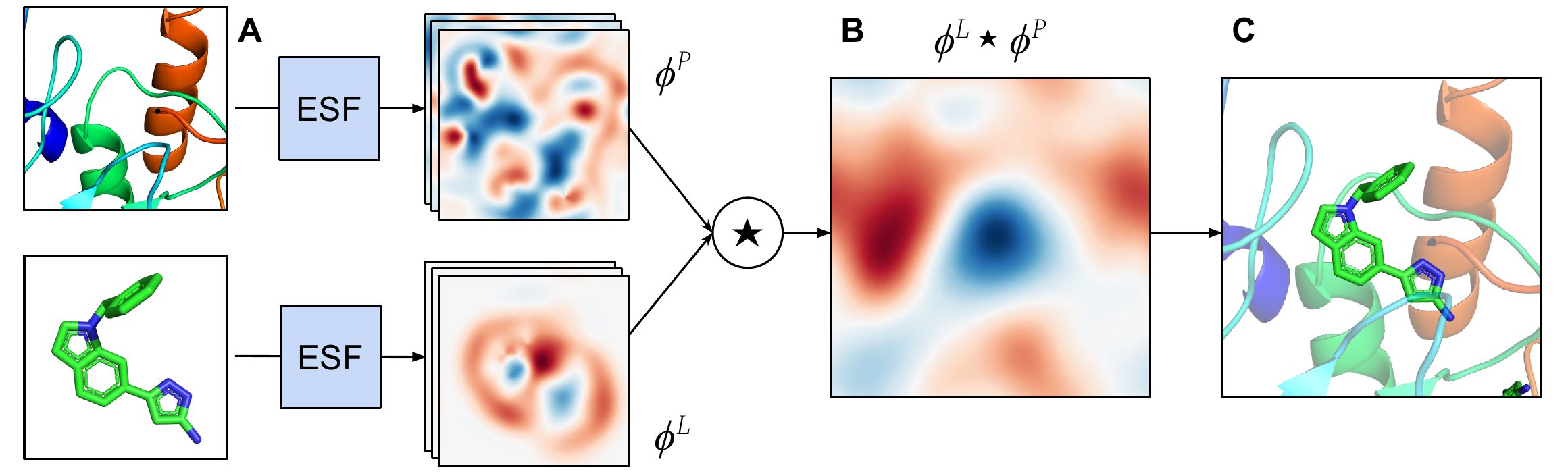}
    \caption{\textbf{Overview} of the scalar field-based scoring function and docking procedure. The translational FFT procedure is shown here; the rotational FFT is similar, albeit harder to visualize. (A)~The protein pocket and ligand conformer are independently passed through equivariant scalar field networks (ESFs) to produce scalar fields. (B) The fields are cross-correlated to produce heatmaps over ligand translations. (C) The ligand coordinates are translated to the argmax of the heatmap. Additional scalar field visualizations are in Appendix~\ref{app:fields}.}
    \label{fig:overview}
\end{figure}
\section{Background}\label{sec:background}

\textbf{Molecular docking.} The two key components of a molecular docking algorithm are (1) one or more scoring functions for ligand poses, and (2) a search, sampling, or optimization procedure. There is considerable variation in the design of these components and how they interact with each other, ranging from exhaustive enumeration and filtering \citep{shoichet1992molecular, meng1992automated} to genetic, gradient-based, or MCMC optimization algorithms \citep{trott2010autodock, morris1998automated, mcnutt2021gnina}. We refer to reviews elsewhere \citep{ferreira2015molecular, torres2019key, fan2019progress} for comprehensive details. These algorithms have undergone decades of development and have been given rise to well-established software packages in academia and industry, such as AutoDock \citep{morris2008molecular}, Vina \citep{trott2010autodock} and Glide \citep{halgren2004glide}. In many of these, the scoring function is designed not only to identify the binding pose, but also to predict the binding affinity or activity of the ligand \citep{su2018comparative}. In this work, however, we focus on learning and evaluating scoring functions for the rapid prediction of binding poses.

\textbf{ML methods in docking.} For over a decade, ML methods have been extensively explored to improve scoring functions for already-docked ligand poses, i.e., for prediction of activity and affinity in structural-based virtual screens \citep{li2021machine, yang2022protein, crampon2022machine}. On the other hand, developing ML scoring functions as the direct optimization objective has required more care due the enormous number of function evaluations involved. MedusaNet \citep{jiang2020guiding} and Gnina \citep{ragoza2017protein, mcnutt2021gnina} proposed to sparsely use CNNs for guidance and re-ranking (respectively) in combination with a traditional scoring function. DeepDock \citep{mendez2021deepdock} used a hypernetwork to predict complex-specific parameters of a simple statistical potential. Recently, geometric deep learning models have explored entirely different paradigms for docking via direct prediction of the binding pose \citep{equibind, zhang2022e3bind, Lu2022TankBind} or via a generative model over ligand poses \citep{corso2023diffdock}. 

\textbf{FFT methods in docking.} Methods based on fast Fourier transforms have been widely applied for the related problem of \emph{protein-protein docking}. \citet{katchalski1992molecular} first proposed using FFTs over the translational space $\mathbb{R}^3$ to rapidly evaluate poses using scalar fields that encode the shape complementarity of the two proteins. Later works extended this method to rotational degrees of freedom \citep{ritchie2000protein, ritchie2008accelerating, padhorny2016protein}, electrostatic potentials and solvent accessibility \citep{gabb1997modelling, mandell2001protein, chen2002docking}, and data-driven potentials \citep{neveu2016pepsi}. Today, FFT methods are a routine step in established protein-protein docking programs such as PIPER \citep{kozakov2006piper}, ClusPro \citep{kozakov2017cluspro}, and HDOCK \citep{yan2020hdock}, where they enable the evaluation of billions of poses, typically as an initial screening step before further evaluation with a more accurate scoring function.

In contrast, FFT methods have been significantly less studied for protein-ligand docking. While a few works have explored this direction \citep{padhorny2018protein, ding2020accelerated, nguyen2018using}, these algorithms have not been widely adopted nor been incorporated into established docking software. A key limitation is that protein-ligand scoring functions are typically more complicated than protein-protein scoring functions and cannot be easily expressed as a cross-correlation between scalar fields \citep{ding2020accelerated}. To our knowledge, no prior works have explored the possibility of overcoming this limitation by \emph{learning} cross-correlation based scoring functions.

\section{Method}

\subsection{Equivariant Scalar Fields}
We consider the inputs to a molecular docking problem to be a pair of protein structure and ligand molecule, encoded as a featurized graphs $G^P, G^L$, and with the protein structure associated with alpha carbon coordinates $\mathbf{X}^P = [\bfx^P_1, \ldots \bfx^P_{N_P}] \in \mathbb{R}^{3\times N_P}$. The molecular docking problem is to find the ligand atomic coordinates $\mathbf{X}^L = [\bfx^L_1, \ldots \bfx^L_{N_L}] \in \mathbb{R}^{3\times N_L}$ of the true binding pose. To this end, our aim is to parameterize and learn (multi-channel) scalar fields $\phi^P:= \phi(\bfx; G^P, \bfX^P)$ and $\phi^L := \phi^L(\bfx; G^L, \bfX^L)$ associated with the protein and ligand structures, respectively, such that the scoring function evaluated on any pose $\bfX^L \in \mathbb{R}^{3N_L}$ is given by
\begin{equation}\label{eq:scoring}
    E(\bfX^P, \bfX^L) = \sum_c \int_{\mathbb{R}^3} \phi^P_c(\bfx; G^P, \bfX^P) \phi^L_c(\bfx; G^L, \bfX^L) \, d^3\mathbf{x}
\end{equation}
where $\phi_c$ refers to the $c^\text{th}$ channel of the scalar field. While neural fields that directly learn functions $\mathbb{R}^3 \rightarrow \mathbb{R}$ have been previously developed as encodings of molecular structures \citep{zhong2019reconstructing}, such a formulation is unsuitable here as the field must be defined relative to the variable-sized structure graphs $G^P, G^L$ and transform appropriately with rigid-body motions of their coordinates.

Instead, we propose to parameterize the scalar field as a sum of contributions from each ligand atom or protein alpha-carbon, where each contribution is defined by its coefficients in a \emph{spherical harmonic expansion} centered at that atom (or alpha-carbon) coordinate in 3D space. To do so, we choose a set $R_j: \mathbb{R}^+ \rightarrow \mathbb{R}$ of radial basis functions (e.g., Gaussian RBFs) in 1D and let $Y^\ell_m$ be the real spherical harmonics. Then we define
\begin{equation}\label{eq:field}
    \phi_c(\bfx; G, \bfX) = \sum_{n,j,\ell,m} A_{c n j \ell m}(G, \bfX) R_j(\lVert \bfx - \bfx_n \rVert) Y_\ell^m\left(\frac{\bfx - \bfx_n}{\lVert \bfx - \bfx_n\rVert}\right)
\end{equation}
where here (and elsewhere) we drop the superscripts $L, P$ for common definitions. Given some constraints on how the vector of coefficients $A_{cnj\ell m}$ transforms under $SE(3)$, this parameterization of the scalar field satisfies the following important properties:
\begin{proposition}\label{prop:prop}
    Suppose the scoring function is parameterized as in Equation~\ref{eq:field} and for any $R\in SO(3), \mathbf{t} \in \mathbb{R}^{3}$ we have $A_{c n j \ell m}(G, R.\bfX + \mathbf{t}) = \sum_{m'} D^\ell_{mm'}(R)A_{c n j \ell m'}(G, \bfX)$ where $D^\ell(R)$ are the (real) Wigner D-matrices, i.e., irreducible representations of $SO(3)$. Then for any $g\in SE(3)$,
    \begin{enumerate}
        \item The scalar field transforms equivariantly: $\phi_c(\bfx; G, g.\bfX) = \phi_c(g^{-1}.\bfx; G, \bfX)$.
        \item The scoring function is invariant: $E(g.\bfX^P, g.\bfX^L) = E(\bfX^P, \bfX^L)$.
    \end{enumerate}
\end{proposition}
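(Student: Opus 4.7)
The plan is to establish the field equivariance (1) first; the invariance (2) then follows by a change of variables. For (2), granting (1), we substitute $\bfy = g^{-1}.\bfx$ in
\[
E(g.\bfX^P, g.\bfX^L) = \sum_c \int_{\mathbb{R}^3} \phi^P_c(\bfx; G^P, g.\bfX^P)\,\phi^L_c(\bfx; G^L, g.\bfX^L)\,d^3\bfx,
\]
apply (1) to each factor, and recover $E(\bfX^P, \bfX^L)$, since rigid motions of $\mathbb{R}^3$ have unit Jacobian.

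For (1), write $g = (R,\mathbf{t})$ and start from the algebraic identity $\bfx - g.\bfx_n = R\bigl(g^{-1}.\bfx - \bfx_n\bigr)$. This immediately gives $\lVert \bfx - g.\bfx_n\rVert = \lVert g^{-1}.\bfx - \bfx_n\rVert$, so the radial factor $R_j$ in $\phi_c(\bfx;G,g.\bfX)$ already matches the one appearing in $\phi_c(g^{-1}.\bfx;G,\bfX)$, and it also shows that the unit-vector argument of $Y^\ell_m$ in the former equals $R$ applied to its argument in the latter.

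Next I substitute the two transformation rules into $\phi_c(\bfx;G,g.\bfX)$: the hypothesis $A_{cnj\ell m}(G, g.\bfX) = \sum_{m'} D^\ell_{mm'}(R)\,A_{cnj\ell m'}(G,\bfX)$, together with the standard spherical-harmonic identity $Y^\ell_m(R\hat{\bfr}) = \sum_{m''} D^\ell_{mm''}(R)\,Y^\ell_{m''}(\hat{\bfr})$ (in the e3nn convention referenced earlier). Summing over the shared index $m$ then produces the factor
\[
\sum_m D^\ell_{mm'}(R)\,D^\ell_{mm''}(R) \;=\; [D^\ell(R)^\top D^\ell(R)]_{m'm''} \;=\; \delta_{m'm''},
\]
where the last equality uses that the real Wigner D-matrices are orthogonal. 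Collapsing the Kronecker delta leaves exactly the expansion defining $\phi_c(g^{-1}.\bfx; G, \bfX)$, which proves (1).

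The main obstacle is conceptual rather than computational: because both $A$ and the evaluated spherical harmonics transform under the \emph{same} representation $D^\ell$, invariance does not fall out of an automatic vector--covector pairing, but hinges on the orthogonality of $D^\ell$. Keeping the index ordering in the $A$-hypothesis consistent with whichever spherical-harmonic convention one adopts is the only delicate bookkeeping; once those conventions are aligned, the remaining manipulations are routine substitution and contraction.
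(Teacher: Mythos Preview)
Your proposal is correct and follows essentially the same route as the paper: reduce (2) to (1) via a volume-preserving change of variables, and prove (1) by combining the hypothesis on $A$ with the rotation behaviour of the spherical harmonics together with the identity $\bfx - g.\bfx_n = R(g^{-1}.\bfx - \bfx_n)$. The only cosmetic difference is that the paper contracts $\sum_m D^\ell_{mm'}(R)\,Y_\ell^m(\hat{\bfs})$ directly into $Y_\ell^{m'}(R^{-1}\hat{\bfs})$, whereas you first expand $Y_\ell^m(R\hat{\bfr})$ and then invoke orthogonality of $D^\ell(R)$; these are equivalent manipulations.
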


See Appendix~\ref{app:math} for the proof. We choose to parameterize $A_{c n j\ell m}(G, R.\bfX)$ with E3NN graph neural networks \citep{thomas2018tensor, geiger2022e3nn}, which are specifically designed to satisfy these equivariance properties and produce all coefficients in a single forward pass. The core of our method consists of the training of two such equivariant scalar field networks (ESFs), one for the ligand and one for the protein, which then parameterize their respective scalar fields. While the second property (invariance of the scoring function) is technically the only one required by the problem symmetries, the first property ensures that different ligand poses related by rigid-body transformations can be evaluated via transformations of the scalar field itself (without re-evaluating the neural network) and is thus essential to our method. 

Next, we show how this parameterization enables ligand poses related by rigid body motions to some reference pose to be rapidly evaluated with fast Fourier transforms (all derivations in Appendix~\ref{app:math}). There are actually two ways to do so: we can evaluate the score of all poses generated by \emph{translations} of the reference pose, or via \emph{rotations} around some fixed point (which we always choose to be the center of mass of ligand). These correspond to FFTs over $\mathbb{R}^3$ and $SO(3)$, respectively.

\subsection{FFT Over Translations}

We first consider the space of poses generated by translations. Given some reference pose $\bfX^L$, the score as a function of the translation is just the cross-correlation of the fields $\phi^L$ and $\phi^P$:
\begin{equation}\label{eq:tr_cc}
    E(\bfX^P, \bfX^L + \mathbf{t}) =  \sum_c \int_{\mathbb{R}^3} \phi_c^P(\bfx) \phi_c^L(\bfx - \mathbf{t}) \, d^3\mathbf{x} = \sum_c (\phi_c^L \star \phi_c^P)(\mathbf{t})
\end{equation}
where we have dropped the dependence on $G, \bfX$ for cleaner notation and applied Proposition~\ref{prop:prop}. By the convolution theorem, these cross-correlations may be evaluated using Fourier transforms:
\begin{equation}\label{eq:cross_corr}
    \phi_c^L \star \phi_c^P = \frac{1}{(2\pi)^{3/2}} \mathcal{F}^{-1}\left\{\overline{\mathcal{F}\left[\phi_c^L\right]} \cdot \mathcal{F}\left[\phi_c^P\right]\right\}
\end{equation}
Hence, in order to simultaneously evaluate all possible translations of the reference pose, we need to compute the Fourier transforms of the protein and ligand scalar fields. One naive way of doing so would be to explicitly evaluate Equation~\ref{eq:field} at an evenly-spaced grid of points spanning the structure and then apply a fast Fourier transform. However, this would be too costly, especially during training time. Instead, we observe that the functional form allows us to immediately obtain the Fourier transform via the expansion coefficients $A_{c n j \ell m}$:
\begin{equation}\label{eq:tr_fourier}
    \mathcal{F}\left[\phi_c\right](\bfk) = \sum_n e^{-i \bfk \cdot \bfx_n}\sum_\ell (-i)^\ell \sum_{m,n} A_{c n j \ell m} \mathcal{H}_\ell[R_j](\lVert \bfk\rVert) Y_\ell^m\left(\bfk / \lVert\bfk\rVert\right)
\end{equation}
where now $Y_\ell^m$ must refer to the complex spherical harmonics and the coefficients must be transformed correspondingly, and
\begin{equation}\label{eq:bessel}
    \mathcal{H}_\ell[R_j](k) = \sqrt{\frac{2}{\pi}} \int_0^\infty j_\ell(kr) R_j(r) r^2 \, dr
\end{equation}
is the $\ell^\text{th}$ order spherical Bessel transform of the radial basis functions. Conceptually, this expression corresponds to first evaluating the Fourier transform of each atom's scalar field in a coordinate system centered at its location, and then translating it via multiplication with $e^{-i\bfk\cdot\bfx_n}$ in Fourier space. Importantly, $\mathcal{H}_\ell[R_j]$ and $Y_\ell^m$ can be precomputed and cached at a grid of points \emph{independently} of any specific structure, such that only the translation terms and expansion coefficients need to be computed for every new example.

\subsection{FFT Over Rotations}
We next consider the space of poses generated by rotations. Suppose that given some reference pose $\bfX^L$, the protein and ligand scalar fields are both expanded around some common coordinate system origin using the complex spherical harmonics and a set of \emph{global radial basis functions} $S_j(r)$:
\begin{equation}\label{eq:global}
    \phi_c(\bfx) = \sum_{j, \ell, m} B_{cj\ell m} S_j(\lVert \bfx \rVert) Y_\ell^m(\bfx / \lVert\bfx\rVert)
\end{equation}
We seek to simultaneously evaluate the score of poses generated via rigid rotations of the ligand, which (thanks again to Proposition~\ref{prop:prop}) is given by the rotational cross-correlation
\begin{equation}\label{eq:rot_cc}
    E(\bfX^P, R.\bfX^L) = \sum_c \int_{\mathbb{R}^3} \phi_c^P(\bfx) \phi_c^L(R^{-1}\bfx)\, d^3\bfx
\end{equation}
Cross-correlations of this form have been previously studied for rapid alignment of crystallographic densities~\citep{kovacs2002fast} and of signals on the sphere in astrophysics~\citep{wandelt2001fast}. It turns out that they can also be evaluated in terms of Fourier sums:
\begin{equation}\label{eq:rot_cc_fft}
    \int_{\mathbb{R}^3} \phi_c^P(\bfx)\phi^L_c(R^{-1}\bfx) \, d^3\bfx = \sum_{\ell,m,h,n} d^\ell_{mh} d^\ell_{hn} I^\ell_{mn} e^{i(m\xi + h\eta + n\omega)}
\end{equation}
where $\xi, \eta, \omega$ are related to the the Euler angles of the rotation $R$, $d^\ell$ is the (constant) Wigner $D$-matrix for a rotation of $\pi/2$ around the $y$-axis, and 
\begin{equation}\label{eq:rot_cc_fft_term}
    I^\ell_{mn} = \sum_{j,k} B^P_{cj\ell m} \overline{B^L_{ck \ell n}}  G_{jk} \quad \text{where} \quad G_{jk} = \int_0^\infty S_j(r) S_k(r)  r^2 \, dr 
\end{equation}
Thus the main task is to compute the complex coefficients $B_{cj\ell m}$ of the ligand and protein scalar fields, respectively. This is not immediate as the fields are defined using expansions in ``local" radial and spherical harmonic bases, i.e., with respect to the individual atom positions as opposed to the coordinate system origin. Furthermore, since we cannot (in practice) use a complete set of radial or angular basis functions, it is generally not possible to express the ligand or protein scalar field as defined in Equation~\ref{eq:field} using the form in Equation~\ref{eq:global}. Instead, we propose to find the coefficients $B_{c j \ell m}$ that give the best approximation to the true scalar fields, in the sense of least squared error.

Specifically, suppose that $\mathbf{R} \in \mathbb{R}^{N_\text{grid} \times N_\text{local}}$ are the values of $N_\text{local}$ real local basis functions (i.e., different origins, RBFs, and spherical harmonics) evaluated at $N_\text{grid}$ grid points and $\mathbf{A} \in \mathbb{R}^{N_\text{local}}$ is the vector of coefficients defining the scalar field $\phi_c$. Similarly define $\mathbf{S}\in \mathbb{R}^{N_\text{grid} \times N_\text{global}}$ using the real versions of the global basis functions. We seek to find the least-squares solution $\mathbf{B} \in \mathbb{R}^{N_\text{global}}$ to the overdetermined system of equations $\mathbf{RA} = \mathbf{SB}$, which is given by
\begin{equation}\label{eq:local2global}
    \mathbf{B} = (\mathbf{S}^T\mathbf{S})^{-1}\mathbf{S}^T\mathbf{R}\mathbf{A}
\end{equation}
Notably, this is simply a linear transformation of the local coefficients $A_{cnj\ell m}$. Thus, if we can precompute the inverse Gram matrix of the global bases $(\mathbf{S}^T\mathbf{S})^{-1}$ and the inner product of the global and local bases $\mathbf{S}^T\mathbf{R}$, then for any new scalar field $\phi_c$ the real global coefficients are immediately available via a linear transformation. The desired complex coefficients can then be easily obtained via a change of bases. At first glance, this still appears challenging due to the continuous space of possible atomic or alpha-carbon positions, but an appropriate discretization makes the precomputation relatively inexpensive without a significant loss of fidelity.

\subsection{Training and Inference}\label{sec:train_inf}

We now study how the rapid cross-correlation procedures presented thus far are used in training and inference. For a given training example with protein structure $\bfX^P$, the scoring function $E(\bfX^P, \bfX^L)$ should ideally attain a maximum at the true ligand pose $\bfX^L = {\bfX^L}^\star$. We equate this task to that of learning an \emph{energy based model} to maximize the log-likelihood of the true pose under the model likelihood $p(\bfX^L) \propto \exp\left[E(\bfX^P, \bfX^L)\right]$. However, as is typically the case for energy-based models, directly optimizing this objective is difficult due to the intractable partition function. 

Instead, following \citet{corso2023diffdock}, we conceptually decompose the ligand pose $\bfX^L$ to be a tuple $\bfX^L = (\bfX^C, R, \mathbf{t})$ consisting of a zero-mean conformer $\bfX^C$, a rotation $R$, and a translation $\mathbf{t}$, from which the pose coordinates are obtained: $\bfX^L = R.\bfX^C + \mathbf{t}$. Then consider the following \emph{conditional} log-likelihoods:
\begin{subequations}\label{eq:objective}
\begin{align}
    \log p(\mathbf{t} \mid \bfX^C, R) &= E(\bfX^P, {\bfX^L}) - \log \int_{\mathbb{R}^3} \exp\left[E(\bfX^P, R.\bfX^C + \mathbf{t}')\right] \, d^3\mathbf{t}' \\
    \log p(R \mid \bfX^C, \mathbf{t}) &= E(\bfX^P, {\bfX^L}) - \log \int_{SO(3)} \exp\left[E\left(\bfX^P-\mathbf{t}, R'.\bfX^C\right)\right] \, dR'
\end{align} 
\end{subequations}
We observe that these integrands are precisely the cross-correlations in Equations~\ref{eq:tr_cc} and \ref{eq:rot_cc}, respectively, and can be quickly evaluated and summed for all values of $\mathbf{t}'$ and $R'$ using fast Fourier transforms. Thus, the integrals---which are the marginal likelihoods $p(\bfX^C, R)$ and $p(\bfX^C, \mathbf{t})$---are tractable and the conditional log-likelihoods can be directly optimized in order to train the neural network. Although neither technically corresponds to the joint log-likelihood of the pose, we find that these training objectives work well in practice and optimize their sum in our training procedure.

At inference time, a rigid protein structure $\bfX^P$ is given and the high-level task is to score or optimize candidate ligand poses $\bfX^L$. A large variety of possible workflows can be imagined; however, for proof of concept and for our experiments in Section~\ref{sec:experiments} we describe and focus on the  following relatively simple inference workflows (presented in greater detail in Appendix~\ref{app:algorithms}):
\begin{itemize}
    \item \textbf{Translational FFT (TF)}. Given a conformer $\bfX^C$, we conduct a grid-based search over $R$ and use FFT to optimize $\mathbf{t}$ in order to find the best pose $(\bfX^C, R, \mathbf{t})$. To do so, we compute the Fourier coefficients (Equation~\ref{eq:tr_fourier}) of the protein $\bfX^P$ \emph{once} and for \emph{each} possible ligand orientation $R.\bfX^C$. We then use translational cross-correlations (Equation~\ref{eq:tr_cc}) to find the best translation $\mathbf{t}$ for each $R$ and return the highest scoring combination.
    \item \textbf{Rotational FFT (RF)}. Given a conformer $\bfX^C$, we conduct a grid-based search over $\mathbf{t}$ and use FFT to optimize $R$. To do so, we compute the global expansion coefficients $B^P_{cj\ell m}$ of the protein $\bfX^L - \mathbf{t}$ relative to \emph{each} possible ligand position $\mathbf{t}$ and \emph{once} for the ligand $\mathbf{X}^C$ relative to its (zero) center of mass (Equation~\ref{eq:local2global}). We then use rotational cross-correlations (Equation~\ref{eq:rot_cc}) to find the best orientation $R$ for each $\mathbf{t}$ and return the highest scoring combination.
    \item \textbf{Translational scoring (TS)}. Here we instead are given a list of poses $(\mathbf{X}^C, R, \mathbf{t})$ and wish to score them. Because the values of $R$ nor $\mathbf{t}$ may not satisfy a grid structure, we cannot use the FFT methods. Nevertheless, we can compute the (translational) Fourier coefficients of the protein $\bfX^P$ and for each unique oriented conformer $R.\mathbf{X}^C$ of the ligand using Equation~\ref{eq:tr_fourier}. We then evaluate
    \begin{equation}\label{eq:tr_score}
        E(\bfX^P, R.\bfX^C + \mathbf{t}) = \sum_c \int_{\mathbb{R}^3} \overline{\mathcal{F}[\phi_c^P](\bfk)} \cdot \mathcal{F}[\phi_c^L(\,\cdot\,; R.\bfX^C)](\bfk) \cdot e^{-i\bfk\cdot \mathbf{t}} \; d^3\mathbf{k}
    \end{equation}
    Since the Fourier transform is an orthogonal operator on functional space, this is equal to the real-space cross-correlation.
    \item \textbf{Rotational scoring (RS)}. Analogously, we can score a list of poses $(\bfX^C, R, \mathbf{t})$ using the global spherical expansions $B_{c j \ell m}$. We obtain the real expansion coefficients of the protein relative to each $\mathbf{t}$ and for each ligand conformer $\bfX^C$ using Equation~\ref{eq:local2global}. The score for $(\mathbf{X}^C, R, \mathbf{t})$ is then given by the rotational cross-correlation
    \begin{equation}\label{eq:rot_score}
        E(\bfX^P, R.\bfX^C + \mathbf{t}) = \sum_{c, j, k, \ell, m, n} B^P_{c j \ell m}(\bfX^P-\mathbf{t}) B^L_{c k \ell n}(\bfX^C) D^\ell_{mn}(R) G_{jk}
    \end{equation}
    where $G_{jk}$ is as defined in Equation~\ref{eq:rot_cc_fft_term} and $D^\ell_{mn}$ are the real Wigner $D$-matrices.
\end{itemize}

\begin{table}[]
    \caption{\textbf{Typical runtimes} of the computations involved in inference-time scoring and optimization procedures, measured on PDBBind with one V100 GPU. The three sets of rows delineate computations that are protein-specific, ligand-specific, or involve both protein and ligand, respectively.}
    \label{tab:computations}
    \begin{center}
    \begin{small}
    \begin{tabular}{l|l|c|c|c|c|c}
    \toprule
    Frequency & Computation & TF & RF & TS & RS & Runtime \\
    \midrule 
    \multirow{2}{*}{Per protein structure} & Coefficients $A_{c n j \ell m}$ & \checkmark & \checkmark & \checkmark & \checkmark & 65 ms \\
    & FFT coefficients & \checkmark & & \checkmark & &  7.0 ms \\
    \rowcolor[RGB]{235, 235, 235} $\hookrightarrow$ Per translation & Global expansion $B_{c j \ell m}$ & & \checkmark & & \checkmark & 80 ms\\
    \midrule 
    \multirow{2}{*}{Per ligand conformer} & Coefficients $A_{c n j \ell m}$ & \checkmark & \checkmark & \checkmark & \checkmark & 4.3 ms \\
    & Global expansion $B_{c j \ell m}$ & & \checkmark & & \checkmark & 17 ms \\
    \rowcolor[RGB]{235, 235, 235} $\hookrightarrow$ Per rotation & FFT coefficients & \checkmark & & \checkmark & & 1.6 ms \\
    \midrule \midrule
    Per conformer $\times$ rotation  & Translational FFT & \checkmark & & & & 160 $\mu$s \\
    Per conformer $\times$ translation & Rotational FFT & & \checkmark & & & 650 $\mu$s \\
    \rowcolor[RGB]{235, 235, 235} & Translational scoring & & & \checkmark & & 1.0 $\mu$s \\
    \rowcolor[RGB]{235, 235, 235}\multirow{-2}{*}{Per pose} & Rotational scoring & & & & \checkmark & 8.2 $\mu$s \\
    \bottomrule
    \end{tabular}
    \end{small}
    \end{center}
    \vspace{-12pt}
\end{table}

The runtime of these workflows can vary significantly depending on the parameters, i.e., number of proteins, ligands, conformers, rotations, and translations, with amortizations possible at several levels. Table~\ref{tab:computations} provides a summary of the computations in each workflow, their frequencies, and typical runtimes. We highlight that the \textbf{RF} workflow is well-suited for virtual screening since the precomputations for the protein and ligand translations within a pocket can be amortized across all ligands. Furthermore, if the ligands are drawn from a shared library, their coefficients can also be precomputed independent of any protein, leaving only the rotational FFT as the cost per ligand-protein pair. See Appendix~\ref{app:amortize} for further discussion on runtime amortization.

\vspace{-6pt}
\section{Experiments}\label{sec:experiments}
\vspace{-6pt}

We train and test our model on the PDBBind dataset \citep{liu2017PDBBind} with splits as defined by \citet{equibind}. We train two variants of our model: ESF and ESF-N, where the latter is trained with rotational and translational noise injected into the examples to increase model robustness. {All other model and inference-time hyperparameters are discussed in Appendix~\ref{app:hyperparams}}. For the test set, we consider both the co-crystal structures in the PDBBind test split and their counterpart ESMFold complexes as prepared by \citet{corso2023diffdock}. We also collect a test set of 77 crystal structures (none of which are in PDBBind) of phosphodiesterase 10A (PDE10A) with different ligands bound to the same pocket \citep{tosstorff2022high}. This industrially-sourced dataset is representative of a real-world use case for molecular docking and benchmarks the benefits of runtime amortization with our approach. 

To evaluate our method against baselines, we note that a scoring function by itself is not directly comparable to complete docking programs, which also include tightly integrated conformer search, pose clustering, and local refinement algorithms. Here, however, we focus on the development of the \emph{scoring function} itself independently of these other components. Thus, we consider two simplified settings for evaluating our model: (1) \textbf{scoring decoy poses} with the aim of identifying the best pose among them, and (2) \textbf{docking rigid conformers} to a given pocket, similar to the re-docking setup in  \citet{equibind}. The first setting focuses on evaluating only the quality of the scoring function itself, whereas the second is a simplified version of a typical docking setting that circumvents some of the confounding factors while still allowing the benchmarking of FFT-accelerated optimization.

For the baselines, we select Gnina \citep{mcnutt2021gnina} as the representative traditional docking software, which runs parallel MCMC chains to collect pose candidates that are then refined and re-ranked to produce the final prediction. Among the scoring functions supported by Gnina, we evaluate its namesake CNN \citep{ragoza2017protein} scoring function and the traditional scoring function of Vina \citep{trott2010autodock}. {For recent deep learning baselines, we adapt TANKBind \citep{Lu2022TankBind} and DiffDock \citep{corso2023diffdock} for decoy-pose scoring. Specifically, with DiffDock we use the provided confidence model---which is natively trained as an $<$2 \AA\ RMSD classifier---to score all poses. With TANKBind, we use the predicted distance map and ground truth conformer interatomic distances to calculate $\mathcal{L}_\text{generation}$ as the pose score. We do not evaluate these methods on pocket-level conformer docking as they cannot be easily adapted for this task.}

\setlength\tabcolsep{4 pt}
\begin{table}
    \caption{\textbf{Decoy scoring results} (median over test complexes). All RMSDs are heavy-atom symmetry aware. The best results from our method (ESF) are underlined if not bolded. {$^\dagger$These methods have been adapted for the pose scoring task; see main text.}}
    \label{tab:decoy_scoring}
    \begin{center}
    \begin{small}
    \begin{tabular}{lcccccccccc}
    \toprule
    & \multicolumn{4}{c}{Crystal structures} &  \multicolumn{4}{c}{ESMFold structures} &  \multicolumn{2}{c}{Time per} \\ 
    \cmidrule(lr){2-5} \cmidrule(lr){6-9} \cmidrule(lr){10-11}
    Method & \makecell{$<$2 \AA\\AUROC} & \makecell{Top\\RMSD} & \makecell{Top\\Rank} & \makecell{\%\\$<$2 \AA}  & \makecell{$<$2 \AA\\AUROC}  & \makecell{Top\\RMSD} & \makecell{Top\\Rank} & \makecell{\%\\$<$2 \AA} & Pose & Complex \\
    \midrule
    Vina     & 0.93 & \textbf{0.54} &  \textbf{2} & \textbf{91} & 0.86 & 2.43 & 419 & 43 & 3.4 ms & 110 s \\
    Gnina & 0.90 & 0.59 & 3 & 83 & 0.84 & 2.19 & 1110 & 46 & 13.0 ms & 426 s\\
    \midrule
    TANKBind$^\dagger$ & 0.69 & 4.01 & 6811 & 10 & 0.64 & 4.22 & 8538 & 9 & 1.1 $\mu$s & 62 ms\\
    DiffDock$^\dagger$ & \textbf{0.96} & 0.66 & 3 & 87 & \textbf{0.89} & 2.01 & 143 & 50 & 62 ms & 2041 s\\
    \midrule
    ESF-TS  & 0.87 & \underline{0.59} & \underline{3} & \underline{87} & 0.82 & \textbf{1.38} & 24 & \textbf{57} & 1.0 $\mu$s & 3.2 s\\
    ESF-RS  & 0.87 & 0.63 & \underline{3} & 85 & 0.82 & 1.75 & \textbf{22} & 53 & 8.2 $\mu$s & 5.7 s\\
    ESF-N-TS & \underline{0.92} & 0.69 & 4 & 81 & \underline{0.87} & 1.64 & \textbf{22} & 54 & 1.0 $\mu$s & 3.2 s\\
    ESF-N-RS &\underline{0.92} & 0.75 & 5 & 80 & \underline{0.87} & 1.74 & 26 & 53 & 8.2 $\mu$s & 5.7 s\\
    \bottomrule
    \end{tabular}
    \end{small}
    \end{center}
    \vspace{-12pt}
\end{table}

\subsection{Scoring Decoys}

For each PDBBind test complex, we generate $32^3 - 1 = 32767$ decoy poses by sampling 31 translational, rotational, and torsional perturbations to the ground truth pose and considering all their possible combinations. On median, the RMSD of the closest decoy is 0.4 \AA, and 1.6\% of all poses ($n=526.5$) are below 2 \AA\ RMSD (Appendix~\ref{app:decoys}). We then score all poses using the baselines and with our method in both \textbf{TS} (Equation~\ref{eq:tr_score}) and \textbf{RS} (Equation~\ref{eq:rot_score}) modes. The quality of each scoring function is evaluated with the AUROC when used as a $<$2\AA\ RMSD classifier, the RMSD of the top-ranked pose (Top RMSD), the rank of the lowest-RMSD pose (Top Rank), and the fraction of complexes for which the top-ranked pose is under 2 \AA\ RMSD.

As shown in Table~\ref{tab:decoy_scoring}, our method is competitive with the Gnina and Vina baselines on crystal structures and better on ESMFold structures. This improved robustness is expected since the interaction terms in traditional scoring functions are primarily mediated by sidechain atoms, which are imperfectly predicted by ESMFold, whereas our scalar fields only indirectly depend on the sidechains via residue-level coefficients. {Our method matches and outperforms the DiffDock confidence model on crystal and ESMFold structures, respectively, in terms of identifying the best poses, although DiffDock obtains better $<$2 \AA\ RMSD AUROC. Finally, our method substantially ourperforms TANKBind in all metrics.} Overall, the performance of our method is superior in the \textbf{TS} mode relative to \textbf{RS}, likely due to the spatially coarser representation of the scalar fields in the global spherical harmonic expansion (i.e., Equation~\ref{eq:global}) relative to the grid-based Cartesian expansion.

In terms of runtime per pose, our method is faster than Vina by several orders of magnitude, with even greater acceleration compared to the CNN-based Gnina and the {E3NN-based DiffDock, both of which require one neural network evaluation for every pose.} The runtime improvement per complex is more tempered since the different proteins and ligand in every complex limit the opportunity for amortization. In fact, of the total runtime per complex in Table~\ref{tab:decoy_scoring}, only 1\% (\textbf{TS}) to 5\% (\textbf{RS}) is due to the pose scoring itself, with the rest due to preprocessing that must be done for every new protein and ligand independently. {Similar to our method, TANKBind requires only a single neural network evaluation per complex and thus has comparable per-pose runtime, but enjoys faster per-complex runtime due to the lack of such preprocessing.} 

\subsection{Docking Conformers}
We consider the task of pocket-level docking where all methods are given as input the ground-truth conformer in a random orientation. Following common practice \citep{mcnutt2021gnina}, we aim to provide $4$ \AA\ of translational uncertainty around the true ligand pose in order to define the binding pocket. To do so, we provide Gnina with a bounding box with 4 \AA\ of padding around the true pose, and provide our method with a cube of side length 8 \AA\ as the search space for $\mathbf{t}$ (with a random grid offset). For PDE10A, we define the pocket using the pose of the first listed complex (PDB 5SFS) and cross-dock to that protein structure. In all docking runs, we deactivate all torsion angles so that Gnina docks the provided conformer to the pocket. 

As shown in Table~\ref{tab:docking}, the baseline scoring functions obtain excellent success rates on the PDBBind crystal structures (79\%). Our method is slightly weaker but also obtains high success rates (73\%). The performance decrease in terms of Median RMSD is somewhat larger, likely due to the coarse search grid over non-FFT degrees of freedom (Appendix~\ref{app:inf_hparam_results}) and the lack of any refinement steps (which are an integral part of Gnina) in our pipeline. On ESMFold structures, however, our method obtains nearly twice the success rate (47\% vs 28\%) of the baseline scoring functions. Unlike in decoy scoring, noisy training noticeably contributes to the performance on ESMFold structures, and the \textbf{RF} procedure generally outperforms \textbf{TF}, likely due to the relatively finer effective search grid in rotational cross-correlations (Appendix~\ref{app:inf_hparam_results}).

Because of the nature of the PDBBind workflow, the total runtime is comparable to or slower than the baselines when precomputations are taken into account. However, in terms of the pose optimization itself, our method is significantly faster than the Gnina baselines, despite performing a brute force search over the non-FFT degrees of freedom. While it is also possible to trade-off performance and runtime by changing various Gnina settings from their default values, our method expands the Pareto front currently available with the Gnina pipeline (Appendix~\ref{app:pareto}; Figure~\ref{fig:pareto}). This favorable tradeoff affirms the practical value-add of our method in the context of existing approaches.

To more concretely demonstrate the runtime improvements of our method with amortization, we then dock the conformers in the PDE10A dataset. Our method again has similar accuracy to the baselines (Table~\ref{tab:docking}); however, because of the common pocket, all protein-level quantities are computed \emph{only once} and the total runtime is significantly accelerated. For the \textbf{RF} procedure in particular, the computation of protein global coefficients on the translational grid is by far the most expensive step (Table~\ref{tab:pdbbind_rf}), and the remaining ligand precomputations are very cheap. The amortization of these coefficients leads to a 45x speedup in the overall runtime (67 s $\rightarrow$ 1.5 s). (The runtime for Gnina is also accelerated, although to a lesser extent, due to the smaller ligand size.) As the number of ligands increases further, the total runtime per complex of our method would further decrease.

\setlength\tabcolsep{6 pt}
\begin{table}
    \caption{\textbf{Rigid conformer docking results} (median over test complexes). All RMSDs are heavy-atom symmetry aware. The median RMSD of our method (ESF) is lower-bounded at 0.5--0.6 \AA\ by the resolution of the search grid (Appendix~\ref{app:inf_hparam_results}). The runtime is shown as an average per complex, excluding / including pre-computations that can be amortized.}
    \label{tab:docking}
    \begin{center}
    \begin{small}
    \begin{tabular}{lcccccccc}
    \toprule
    & \multicolumn{5}{c}{PDBBind test} \\
    \cmidrule(lr){2-6}
    & \multicolumn{2}{c}{Crystal} &  \multicolumn{2}{c}{ESMFold} & & \multicolumn{3}{c}{PDE10A} \\
    \cmidrule(lr){2-3} \cmidrule(lr){4-5} \cmidrule(lr){7-9}
    Method & \makecell{\%\\$<$2 \AA} & \makecell{Med.\\RMSD}  & \makecell{\%\\$<$2 \AA} & \makecell{Med.\\RMSD} &  Runtime & \makecell{\%\\$<$2 \AA} & \makecell{Med.\\RMSD} & Runtime\\
    \midrule
    Vina   & \textbf{79} & \textbf{0.32} & 24 & 6.1 & 20 s & \textbf{74} & \textbf{0.75} & 6.1 s \\
    Gnina & 77 & 0.33 & 28 & 5.9 & 23 s & {73} & 0.77 & 6.0 s \\
    \midrule 
    ESF-TF & 70 & 1.13 & 31 & 4.6 & 0.8 s / 8.3 s & 67 & 1.20 & 1.0 s / 7.1 s\\ 
    ESF-RF & 71 & \underline{0.97} & 32 & 4.4 & 0.5 s / 67 s & \underline{73} & \underline{0.82} & 0.5 s / 1.5 s \\ 
    ESF-N-TF & 72 & 1.10 & 46 & \textbf{2.9} & 0.7 s / 8.2 s & 64 & 1.11 & 1.0 s / 7.2 s \\ 
    ESF-N-RF & \underline{73} & 1.00 & \textbf{47} & 3.0 & 0.5 s / 68 s & 70 & 1.00 & 0.5 s / 1.5 s\\
    \bottomrule
    \end{tabular}
    \end{small}
    \end{center}
    \vspace{-12pt}
\end{table}

\section{Conclusion}
We have proposed a machine-learned based scoring function for accelerating pose optimization in molecular docking. Different from existing scoring functions, the score is defined as a cross-correlation between scalar fields, which enables the use of FFTs for rapid search and optimization. We have formulated a novel parameterization for such scalar fields with equivariant neural networks, as well as training and inference procedures with opportunities for significant runtime amortization. Our scoring function shows comparable performance but improved runtime on two simplified docking-related tasks relative to standard optimization procedures and scoring functions. Thus, our methodology holds promise when integrated with other components into a full docking pipeline. These integrations may include multi-resolution search, refinement with traditional scoring functions, and architectural adaptations for conformational (i.e., torsional) degrees of freedom---all potential directions of future work.
\newpage
\section*{Acknowledgments}
We thank Hannes St{\"a}rk, Samuel Sledzieski, Ruochi Zhang, Michael Brocidiacono, Gabriele Corso, Xiang Fu, Felix Faltings, Ameya Daigavane, and Mario Geiger for helpful feedback and discussions. This work was supported by the NIH NIGMS under grant \#1R35GM141861 and a
Department of Energy Computational Science Graduate Fellowship.

\bibliography{references}

\begin{thebibliography}{47}
\providecommand{\natexlab}[1]{#1}
\providecommand{\url}[1]{\texttt{#1}}
\expandafter\ifx\csname urlstyle\endcsname\relax
  \providecommand{\doi}[1]{doi: #1}\else
  \providecommand{\doi}{doi: \begingroup \urlstyle{rm}\Url}\fi

\bibitem[Chen \& Weng(2002)Chen and Weng]{chen2002docking}
Rong Chen and Zhiping Weng.
\newblock Docking unbound proteins using shape complementarity, desolvation,
  and electrostatics.
\newblock \emph{Proteins: Structure, Function, and Bioinformatics}, 47\penalty0
  (3):\penalty0 281--294, 2002.

\bibitem[Corso et~al.(2023)Corso, St{\"a}rk, Jing, Barzilay, and
  Jaakkola]{corso2023diffdock}
Gabriele Corso, Hannes St{\"a}rk, Bowen Jing, Regina Barzilay, and Tommi~S
  Jaakkola.
\newblock Diffdock: Diffusion steps, twists, and turns for molecular docking.
\newblock In \emph{The Eleventh International Conference on Learning
  Representations}, 2023.

\bibitem[Crampon et~al.(2022)Crampon, Giorkallos, Deldossi, Baud, and
  Steffenel]{crampon2022machine}
Kevin Crampon, Alexis Giorkallos, Myrtille Deldossi, St{\'e}phanie Baud, and
  Luiz~Angelo Steffenel.
\newblock Machine-learning methods for ligand--protein molecular docking.
\newblock \emph{Drug discovery today}, 27\penalty0 (1):\penalty0 151--164,
  2022.

\bibitem[Ding et~al.(2020)Ding, Wu, Wang, Vilseck, and
  Brooks~III]{ding2020accelerated}
Xinqiang Ding, Yujin Wu, Yanming Wang, Jonah~Z Vilseck, and Charles~L
  Brooks~III.
\newblock Accelerated cdocker with gpus, parallel simulated annealing, and fast
  fourier transforms.
\newblock \emph{Journal of chemical theory and computation}, 16\penalty0
  (6):\penalty0 3910--3919, 2020.

\bibitem[Fan et~al.(2019)Fan, Fu, and Zhang]{fan2019progress}
Jiyu Fan, Ailing Fu, and Le~Zhang.
\newblock Progress in molecular docking.
\newblock \emph{Quantitative Biology}, 7:\penalty0 83--89, 2019.

\bibitem[Ferreira et~al.(2015)Ferreira, Dos~Santos, Oliva, and
  Andricopulo]{ferreira2015molecular}
Leonardo~G Ferreira, Ricardo~N Dos~Santos, Glaucius Oliva, and Adriano~D
  Andricopulo.
\newblock Molecular docking and structure-based drug design strategies.
\newblock \emph{Molecules}, 20\penalty0 (7):\penalty0 13384--13421, 2015.

\bibitem[Gabb et~al.(1997)Gabb, Jackson, and Sternberg]{gabb1997modelling}
Henry~A Gabb, Richard~M Jackson, and Michael~JE Sternberg.
\newblock Modelling protein docking using shape complementarity, electrostatics
  and biochemical information.
\newblock \emph{Journal of molecular biology}, 272\penalty0 (1):\penalty0
  106--120, 1997.

\bibitem[Geiger \& Smidt(2022)Geiger and Smidt]{geiger2022e3nn}
Mario Geiger and Tess Smidt.
\newblock e3nn: Euclidean neural networks.
\newblock \emph{arXiv preprint arXiv:2207.09453}, 2022.

\bibitem[Halgren et~al.(2004)Halgren, Murphy, Friesner, Beard, Frye, Pollard,
  and Banks]{halgren2004glide}
Thomas~A Halgren, Robert~B Murphy, Richard~A Friesner, Hege~S Beard, Leah~L
  Frye, W~Thomas Pollard, and Jay~L Banks.
\newblock Glide: a new approach for rapid, accurate docking and scoring. 2.
  enrichment factors in database screening.
\newblock \emph{Journal of medicinal chemistry}, 2004.

\bibitem[Jiang et~al.(2020)Jiang, Fan, Wang, Sarma, Mohanty, Dokholyan,
  Mahdavi, and Kandemir]{jiang2020guiding}
Huaipan Jiang, Mengran Fan, Jian Wang, Anup Sarma, Shruti Mohanty, Nikolay~V
  Dokholyan, Mehrdad Mahdavi, and Mahmut~T Kandemir.
\newblock Guiding conventional protein--ligand docking software with
  convolutional neural networks.
\newblock \emph{Journal of Chemical Information and Modeling}, 60\penalty0
  (10):\penalty0 4594--4602, 2020.

\bibitem[Jing et~al.(2022)Jing, Corso, Chang, Barzilay, and
  Jaakkola]{jing2022torsional}
Bowen Jing, Gabriele Corso, Jeffrey Chang, Regina Barzilay, and Tommi Jaakkola.
\newblock Torsional diffusion for molecular conformer generation.
\newblock \emph{arXiv preprint arXiv:2206.01729}, 2022.

\bibitem[Katchalski-Katzir et~al.(1992)Katchalski-Katzir, Shariv, Eisenstein,
  Friesem, Aflalo, and Vakser]{katchalski1992molecular}
Ephraim Katchalski-Katzir, Isaac Shariv, Miriam Eisenstein, Asher~A Friesem,
  Claude Aflalo, and Ilya~A Vakser.
\newblock Molecular surface recognition: determination of geometric fit between
  proteins and their ligands by correlation techniques.
\newblock \emph{Proceedings of the National Academy of Sciences}, 89\penalty0
  (6):\penalty0 2195--2199, 1992.

\bibitem[Kovacs \& Wriggers(2002)Kovacs and Wriggers]{kovacs2002fast}
Julio~A Kovacs and Willy Wriggers.
\newblock Fast rotational matching.
\newblock \emph{Acta Crystallographica Section D: Biological Crystallography},
  58\penalty0 (8):\penalty0 1282--1286, 2002.

\bibitem[Kozakov et~al.(2006)Kozakov, Brenke, Comeau, and
  Vajda]{kozakov2006piper}
Dima Kozakov, Ryan Brenke, Stephen~R Comeau, and Sandor Vajda.
\newblock Piper: an fft-based protein docking program with pairwise potentials.
\newblock \emph{Proteins: Structure, Function, and Bioinformatics}, 65\penalty0
  (2):\penalty0 392--406, 2006.

\bibitem[Kozakov et~al.(2017)Kozakov, Hall, Xia, Porter, Padhorny, Yueh,
  Beglov, and Vajda]{kozakov2017cluspro}
Dima Kozakov, David~R Hall, Bing Xia, Kathryn~A Porter, Dzmitry Padhorny,
  Christine Yueh, Dmitri Beglov, and Sandor Vajda.
\newblock The cluspro web server for protein--protein docking.
\newblock \emph{Nature protocols}, 12\penalty0 (2):\penalty0 255--278, 2017.

\bibitem[Li et~al.(2021)Li, Sze, Lu, and Ballester]{li2021machine}
Hongjian Li, Kam-Heung Sze, Gang Lu, and Pedro~J Ballester.
\newblock Machine-learning scoring functions for structure-based virtual
  screening.
\newblock \emph{Wiley Interdisciplinary Reviews: Computational Molecular
  Science}, 11\penalty0 (1):\penalty0 e1478, 2021.

\bibitem[Liu et~al.(2017)Liu, Su, Han, Liu, Yang, Li, and Wang]{liu2017PDBBind}
Zhihai Liu, Minyi Su, Li~Han, Jie Liu, Qifan Yang, Yan Li, and Renxiao Wang.
\newblock Forging the basis for developing protein–ligand interaction scoring
  functions.
\newblock \emph{Accounts of Chemical Research}, 50\penalty0 (2):\penalty0
  302--309, 2017.

\bibitem[Lu et~al.(2022)Lu, Wu, Zhang, Rao, Li, and Zheng]{Lu2022TankBind}
Wei Lu, Qifeng Wu, Jixian Zhang, Jiahua Rao, Chengtao Li, and Shuangjia Zheng.
\newblock Tankbind: Trigonometry-aware neural networks for drug-protein binding
  structure prediction.
\newblock \emph{Advances in neural information processing systems}, 2022.

\bibitem[Mandell et~al.(2001)Mandell, Roberts, Pique, Kotlovyi, Mitchell,
  Nelson, Tsigelny, and Ten~Eyck]{mandell2001protein}
Jeffrey~G Mandell, Victoria~A Roberts, Michael~E Pique, Vladimir Kotlovyi,
  Julie~C Mitchell, Erik Nelson, Igor Tsigelny, and Lynn~F Ten~Eyck.
\newblock Protein docking using continuum electrostatics and geometric fit.
\newblock \emph{Protein engineering}, 14\penalty0 (2):\penalty0 105--113, 2001.

\bibitem[McNutt et~al.(2021)McNutt, Francoeur, Aggarwal, Masuda, Meli, Ragoza,
  Sunseri, and Koes]{mcnutt2021gnina}
Andrew~T McNutt, Paul Francoeur, Rishal Aggarwal, Tomohide Masuda, Rocco Meli,
  Matthew Ragoza, Jocelyn Sunseri, and David~Ryan Koes.
\newblock Gnina 1.0: molecular docking with deep learning.
\newblock \emph{Journal of cheminformatics}, 13\penalty0 (1):\penalty0 1--20,
  2021.

\bibitem[M{\'e}ndez-Lucio et~al.(2021)M{\'e}ndez-Lucio, Ahmad, del Rio-Chanona,
  and Wegner]{mendez2021deepdock}
Oscar M{\'e}ndez-Lucio, Mazen Ahmad, Ehecatl~Antonio del Rio-Chanona, and
  J{\"o}rg~Kurt Wegner.
\newblock A geometric deep learning approach to predict binding conformations
  of bioactive molecules.
\newblock \emph{Nature Machine Intelligence}, 3\penalty0 (12):\penalty0
  1033--1039, 2021.

\bibitem[Meng et~al.(1992)Meng, Shoichet, and Kuntz]{meng1992automated}
Elaine~C Meng, Brian~K Shoichet, and Irwin~D Kuntz.
\newblock Automated docking with grid-based energy evaluation.
\newblock \emph{Journal of computational chemistry}, 13\penalty0 (4):\penalty0
  505--524, 1992.

\bibitem[Morris \& Lim-Wilby(2008)Morris and Lim-Wilby]{morris2008molecular}
Garrett~M Morris and Marguerita Lim-Wilby.
\newblock Molecular docking.
\newblock In \emph{Molecular modeling of proteins}, pp.\  365--382. Springer,
  2008.

\bibitem[Morris et~al.(1998)Morris, Goodsell, Halliday, Huey, Hart, Belew, and
  Olson]{morris1998automated}
Garrett~M Morris, David~S Goodsell, Robert~S Halliday, Ruth Huey, William~E
  Hart, Richard~K Belew, and Arthur~J Olson.
\newblock Automated docking using a lamarckian genetic algorithm and an
  empirical binding free energy function.
\newblock \emph{Journal of computational chemistry}, 19\penalty0 (14):\penalty0
  1639--1662, 1998.

\bibitem[Neveu et~al.(2016)Neveu, Ritchie, Popov, and Grudinin]{neveu2016pepsi}
Emilie Neveu, David~W Ritchie, Petr Popov, and Sergei Grudinin.
\newblock Pepsi-dock: a detailed data-driven protein--protein interaction
  potential accelerated by polar fourier correlation.
\newblock \emph{Bioinformatics}, 32\penalty0 (17):\penalty0 i693--i701, 2016.

\bibitem[Nguyen et~al.(2018)Nguyen, Zhou, and Minh]{nguyen2018using}
Trung~Hai Nguyen, Huan-Xiang Zhou, and David~DL Minh.
\newblock Using the fast fourier transform in binding free energy calculations.
\newblock \emph{Journal of computational chemistry}, 39\penalty0 (11):\penalty0
  621--636, 2018.

\bibitem[Padhorny et~al.(2016)Padhorny, Kazennov, Zerbe, Porter, Xia,
  Mottarella, Kholodov, Ritchie, Vajda, and Kozakov]{padhorny2016protein}
Dzmitry Padhorny, Andrey Kazennov, Brandon~S Zerbe, Kathryn~A Porter, Bing Xia,
  Scott~E Mottarella, Yaroslav Kholodov, David~W Ritchie, Sandor Vajda, and
  Dima Kozakov.
\newblock Protein--protein docking by fast generalized fourier transforms on 5d
  rotational manifolds.
\newblock \emph{Proceedings of the National Academy of Sciences}, 113\penalty0
  (30):\penalty0 E4286--E4293, 2016.

\bibitem[Padhorny et~al.(2018)Padhorny, Hall, Mirzaei, Mamonov, Moghadasi,
  Alekseenko, Beglov, and Kozakov]{padhorny2018protein}
Dzmitry Padhorny, David~R Hall, Hanieh Mirzaei, Artem~B Mamonov, Mohammad
  Moghadasi, Andrey Alekseenko, Dmitri Beglov, and Dima Kozakov.
\newblock Protein--ligand docking using fft based sampling: D3r case study.
\newblock \emph{Journal of computer-aided molecular design}, 32:\penalty0
  225--230, 2018.

\bibitem[Quiroga \& Villarreal(2016)Quiroga and Villarreal]{quiroga2016vinardo}
Rodrigo Quiroga and Marcos~A Villarreal.
\newblock Vinardo: A scoring function based on autodock vina improves scoring,
  docking, and virtual screening.
\newblock \emph{PloS one}, 11\penalty0 (5):\penalty0 e0155183, 2016.

\bibitem[Ragoza et~al.(2017)Ragoza, Hochuli, Idrobo, Sunseri, and
  Koes]{ragoza2017protein}
Matthew Ragoza, Joshua Hochuli, Elisa Idrobo, Jocelyn Sunseri, and David~Ryan
  Koes.
\newblock Protein--ligand scoring with convolutional neural networks.
\newblock \emph{Journal of chemical information and modeling}, 57\penalty0
  (4):\penalty0 942--957, 2017.

\bibitem[Ritchie \& Kemp(2000)Ritchie and Kemp]{ritchie2000protein}
David~W Ritchie and Graham~JL Kemp.
\newblock Protein docking using spherical polar fourier correlations.
\newblock \emph{Proteins: Structure, Function, and Bioinformatics}, 39\penalty0
  (2):\penalty0 178--194, 2000.

\bibitem[Ritchie et~al.(2008)Ritchie, Kozakov, and
  Vajda]{ritchie2008accelerating}
David~W Ritchie, Dima Kozakov, and Sandor Vajda.
\newblock Accelerating and focusing protein--protein docking correlations using
  multi-dimensional rotational fft generating functions.
\newblock \emph{Bioinformatics}, 24\penalty0 (17):\penalty0 1865--1873, 2008.

\bibitem[Shoichet et~al.(1992)Shoichet, Kuntz, and
  Bodian]{shoichet1992molecular}
Brian~K Shoichet, Irwin~D Kuntz, and Dale~L Bodian.
\newblock Molecular docking using shape descriptors.
\newblock \emph{Journal of computational chemistry}, 13\penalty0 (3):\penalty0
  380--397, 1992.

\bibitem[St{\"a}rk et~al.(2022)St{\"a}rk, Ganea, Pattanaik, Barzilay, and
  Jaakkola]{equibind}
Hannes St{\"a}rk, Octavian Ganea, Lagnajit Pattanaik, Regina Barzilay, and
  Tommi Jaakkola.
\newblock Equibind: Geometric deep learning for drug binding structure
  prediction.
\newblock In \emph{International Conference on Machine Learning}, pp.\
  20503--20521. PMLR, 2022.

\bibitem[Su et~al.(2018)Su, Yang, Du, Feng, Liu, Li, and
  Wang]{su2018comparative}
Minyi Su, Qifan Yang, Yu~Du, Guoqin Feng, Zhihai Liu, Yan Li, and Renxiao Wang.
\newblock Comparative assessment of scoring functions: the casf-2016 update.
\newblock \emph{Journal of chemical information and modeling}, 59\penalty0
  (2):\penalty0 895--913, 2018.

\bibitem[Thomas et~al.(2018)Thomas, Smidt, Kearnes, Yang, Li, Kohlhoff, and
  Riley]{thomas2018tensor}
Nathaniel Thomas, Tess Smidt, Steven Kearnes, Lusann Yang, Li~Li, Kai Kohlhoff,
  and Patrick Riley.
\newblock Tensor field networks: Rotation-and translation-equivariant neural
  networks for 3d point clouds.
\newblock \emph{arXiv preprint}, 2018.

\bibitem[Tingle et~al.(2023)Tingle, Tang, Castanon, Gutierrez, Khurelbaatar,
  Dandarchuluun, Moroz, and Irwin]{tingle2023zinc}
Benjamin~I Tingle, Khanh~G Tang, Mar Castanon, John~J Gutierrez, Munkhzul
  Khurelbaatar, Chinzorig Dandarchuluun, Yurii~S Moroz, and John~J Irwin.
\newblock Zinc-22-- a free multi-billion-scale database of tangible compounds
  for ligand discovery.
\newblock \emph{Journal of Chemical Information and Modeling}, 63\penalty0
  (4):\penalty0 1166--1176, 2023.

\bibitem[Torres et~al.(2019)Torres, Sodero, Jofily, and
  Silva-Jr]{torres2019key}
Pedro~HM Torres, Ana~CR Sodero, Paula Jofily, and Floriano~P Silva-Jr.
\newblock Key topics in molecular docking for drug design.
\newblock \emph{International journal of molecular sciences}, 20\penalty0
  (18):\penalty0 4574, 2019.

\bibitem[Tosstorff et~al.(2022)Tosstorff, Rudolph, Cole, Reutlinger, Kramer,
  Schaffhauser, Nilly, Flohr, and Kuhn]{tosstorff2022high}
Andreas Tosstorff, Markus~G Rudolph, Jason~C Cole, Michael Reutlinger,
  Christian Kramer, Herv{\'e} Schaffhauser, Agn{\`e}s Nilly, Alexander Flohr,
  and Bernd Kuhn.
\newblock A high quality, industrial data set for binding affinity prediction:
  performance comparison in different early drug discovery scenarios.
\newblock \emph{Journal of Computer-Aided Molecular Design}, 36\penalty0
  (10):\penalty0 753--765, 2022.

\bibitem[Trott \& Olson(2010)Trott and Olson]{trott2010autodock}
Oleg Trott and Arthur~J Olson.
\newblock Autodock vina: improving the speed and accuracy of docking with a new
  scoring function, efficient optimization, and multithreading.
\newblock \emph{Journal of computational chemistry}, 31\penalty0 (2):\penalty0
  455--461, 2010.

\bibitem[Wandelt \& G{\'o}rski(2001)Wandelt and G{\'o}rski]{wandelt2001fast}
Benjamin~D Wandelt and Krzysztof~M G{\'o}rski.
\newblock Fast convolution on the sphere.
\newblock \emph{Physical review D}, 63\penalty0 (12):\penalty0 123002, 2001.

\bibitem[Wikipedia(2023)]{wiki_hankel}
Wikipedia.
\newblock Hankel transform --- {W}ikipedia{,} the free encyclopedia, 2023.
\newblock URL
  \url{https://en.wikipedia.org/wiki/Hankel_transform\#Fourier_transform_in_three_dimensions}.

\bibitem[Yan et~al.(2020)Yan, Tao, He, and Huang]{yan2020hdock}
Yumeng Yan, Huanyu Tao, Jiahua He, and Sheng-You Huang.
\newblock The hdock server for integrated protein--protein docking.
\newblock \emph{Nature protocols}, 15\penalty0 (5):\penalty0 1829--1852, 2020.

\bibitem[Yang et~al.(2022)Yang, Chen, and Zhang]{yang2022protein}
Chao Yang, Eric~Anthony Chen, and Yingkai Zhang.
\newblock Protein--ligand docking in the machine-learning era.
\newblock \emph{Molecules}, 27\penalty0 (14):\penalty0 4568, 2022.

\bibitem[Yershova et~al.(2010)Yershova, Jain, Lavalle, and
  Mitchell]{yershova2010generating}
Anna Yershova, Swati Jain, Steven~M Lavalle, and Julie~C Mitchell.
\newblock Generating uniform incremental grids on so (3) using the hopf
  fibration.
\newblock \emph{The International journal of robotics research}, 29\penalty0
  (7):\penalty0 801--812, 2010.

\bibitem[Zhang et~al.(2022)Zhang, Cai, Shi, Zhong, and Tang]{zhang2022e3bind}
Yangtian Zhang, Huiyu Cai, Chence Shi, Bozitao Zhong, and Jian Tang.
\newblock E3bind: An end-to-end equivariant network for protein-ligand docking.
\newblock \emph{arXiv preprint arXiv:2210.06069}, 2022.

\bibitem[Zhong et~al.(2019)Zhong, Bepler, Davis, and
  Berger]{zhong2019reconstructing}
Ellen~D Zhong, Tristan Bepler, Joseph~H Davis, and Bonnie Berger.
\newblock Reconstructing continuous distributions of 3d protein structure from
  cryo-em images.
\newblock In \emph{International Conference on Learning Representations}, 2019.

\end{thebibliography}
\bibliographystyle{iclr2024_conference}

\newpage
\appendix
\section{Mathematical Details} \label{app:math}

\subsection{Proof of Proposition 1}
\begin{prop}
    Suppose the scoring function is parameterized as in Equation~\ref{eq:field} and for any $R\in SO(3), \mathbf{t} \in \mathbb{R}^{3}$ we have $A_{c n j \ell m}(G, R.\bfX + \mathbf{t}) = \sum_{m'} D^\ell_{mm'}(R)A_{c n j \ell m'}(G, \bfX)$ where $D^\ell(R)$ are the (real) Wigner D-matrices, i.e., irreducible representations of $SO(3)$. Then for any $g\in SE(3)$,
    \begin{enumerate}
        \item The scalar field transforms equivariantly: $\phi_c(\bfx; G, g.\bfX) = \phi_c(g^{-1}.\bfx; G, \bfX)$.
        \item The scoring function is invariant: $E(g.\bfX^P, g.\bfX^L) = E(\bfX^P, \bfX^L)$.
    \end{enumerate}
\end{prop}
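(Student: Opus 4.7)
The plan is to prove part (1) directly from the definition of $\phi_c$ in Equation~\ref{eq:field}, and then deduce part (2) by a change-of-variables argument using part (1). Writing $g = (R, \mathbf{t})$ with action $g.\bfx = R\bfx + \mathbf{t}$, the strategy for the equivariance statement hinges on three ingredients acting in concert: (i) the radial functions $R_j$ depend only on the norm, which is invariant under the orthogonal action $R$; (ii) the real spherical harmonics transform under rotation of their argument according to the (real) Wigner $D$-matrices, i.e., $Y_\ell^m(R\hat{\mathbf{r}}) = \sum_{m''} D^\ell_{m'' m}(R)\, Y_\ell^{m''}(\hat{\mathbf{r}})$; and (iii) the assumed transformation law on the coefficients $A_{cnj\ell m}$.

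Concretely, I would first substitute $g.\bfX$ into Equation~\ref{eq:field}, so that each atom position $\bfx_n$ gets replaced by $R\bfx_n + \mathbf{t}$. Next I would factor
\[
\bfx - (R\bfx_n + \mathbf{t}) \;=\; R\bigl(R^{-1}(\bfx - \mathbf{t}) - \bfx_n\bigr) \;=\; R\,(g^{-1}.\bfx - \bfx_n),
\]
which reduces the displacement vectors appearing in $R_j$ and $Y_\ell^m$ to a rotation of the corresponding displacement at the untransformed pose $g^{-1}.\bfx$. Using (i), the $R_j$ factor immediately becomes $R_j(\lVert g^{-1}.\bfx - \bfx_n\rVert)$. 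Using (ii), the spherical harmonic factor becomes $\sum_{m''} D^\ell_{m''m}(R)\, Y_\ell^{m''}$ evaluated at the unit vector of $g^{-1}.\bfx - \bfx_n$. Finally, applying (iii) to the coefficient and collecting all rotation matrices, I arrive at a factor $\sum_{m} D^\ell_{m'' m}(R) D^\ell_{m m'}(R)$ which, by unitarity of the irreducible representation $D^\ell$, collapses to $\delta_{m'' m'}$. After this cancellation the expression is exactly $\phi_c(g^{-1}.\bfx; G, \bfX)$, yielding part (1).

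For part (2), I would directly substitute the equivariance identity from part (1) into Equation~\ref{eq:scoring} applied to the transformed poses:
\[
E(g.\bfX^P, g.\bfX^L) \;=\; \sum_c \int_{\mathbb{R}^3} \phi_c^P(g^{-1}.\bfx;\, G^P, \bfX^P)\,\phi_c^L(g^{-1}.\bfx;\, G^L, \bfX^L)\,d^3\bfx,
\]
then perform the change of variables $\bfy = g^{-1}.\bfx$. Since $g \in SE(3)$ acts as an orthogonal transformation composed with a translation, its Jacobian determinant is $1$ and $\mathbb{R}^3$ is mapped to itself, so the integral equals $E(\bfX^P, \bfX^L)$.

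The only real obstacle is bookkeeping around conventions: the index placement in the transformation laws of the real spherical harmonics and of the real Wigner $D$-matrices must be chosen consistently so that the matrix product $D^\ell(R)^\top D^\ell(R) = I$ (or $D^\ell(R) D^\ell(R)^\top = I$) is the version that actually appears after collecting terms. Once that convention is pinned down---in particular, verifying that the assumed transformation law on $A_{cnj\ell m}$ uses the same index ordering as the spherical harmonic rotation formula---the proof is essentially a direct substitution. Neither part requires any nontrivial estimate or limiting argument.
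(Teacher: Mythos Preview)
Your proposal is correct and follows essentially the same route as the paper's proof: both substitute $g.\bfX$ into Equation~\ref{eq:field}, use the factorization $\bfx - g.\bfx_n = R(g^{-1}.\bfx - \bfx_n)$ to handle the radial and angular parts, and finish part~(2) by the volume-preserving change of variables $\bfx' = g^{-1}.\bfx$. The only cosmetic difference is that the paper absorbs the orthogonality of $D^\ell$ into a single application of the spherical-harmonic rotation identity $\sum_m D^\ell_{mm'}(R)\,Y_\ell^m(\hat{\mathbf r}) = Y_\ell^{m'}(R^T\hat{\mathbf r})$, whereas you expand both the coefficient and the harmonic and then contract the two $D$-matrices explicitly---as you note, this requires getting the index convention right so that the contraction is $D^\ell(R)D^\ell(R)^\top$ rather than $D^\ell(R)^2$, but once that is fixed the two arguments are identical.
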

\begin{proof}
Let the action of $g = (R, \mathbf{t}) \in SE(3)$ be written as $g: \bfx \mapsto R\bfx + \mathbf{t}$ and hence $g^{-1}: \bfx \mapsto R^T(\bfx - \mathbf{t})$. We first note that $\lVert \bfx - g.\bfx_n \rVert = \lVert g^{-1}.\bfx - \bfx_n \rVert$ and $R^T(\bfx - g.\bfx_n) = g^{-1}.\bfx - \bfx_n$. Then
\begin{align*}
    \phi_c(\bfx; G, g.\bfX) 
    &= \sum_{n,j,\ell,m} A_{c n j \ell m}(G, R.\bfX + \mathbf{t}) R_j(\lVert \bfx - g.\bfx_n \rVert) Y_\ell^m\left(\frac{\bfx - g.\bfx_n}{\lVert \bfx - g.\bfx_n \rVert}\right)\\
    &= \sum_{n,j,\ell,m'}A_{c n j \ell m'}(G, \bfX) R_j(\lVert \bfx - g.\bfx_n \rVert)  \sum_m D^\ell_{mm'}(R)Y_\ell^m\left(\frac{\bfx - g.\bfx_n }{\lVert\bfx - g.\bfx_n \rVert}\right)\\
    &= \sum_{n,j,\ell,m'}A_{c n j \ell m'}(G, \bfX) R_j(\lVert \bfx - g.\bfx_n \rVert)  Y_\ell^{m'}\left(\frac{R^T(\bfx - g.\bfx_n) }{\lVert\bfx - g.\bfx_n \rVert}\right)\\
    &= \sum_{n,j,\ell,m'}A_{c n j \ell m'}(G, \bfX) R_j(\lVert g^{-1}.\bfx - \bfx_n \rVert) Y_\ell^{m'}\left(\frac{g^{-1}.\bfx - \bfx_n}{\lVert g^{-1}.\bfx - \bfx_n \rVert}\right)\\
    &=\phi_c(g^{-1}.\bfx; G, \bfX)
\end{align*}
Next,
\begin{align*}
    E(g.\bfx^P, g.\bfx^L) &= \sum_c \int_{\mathbb{R}^3} \phi^P_c(\bfx; G^P, g.\bfX^P) \phi^L_c(\bfx; G^L, g.\bfX^L) \, d^3\mathbf{x}\\
    &= \sum_c \int_{\mathbb{R}^3} \phi^P_c(g^{-1}\bfx; G^P, \bfX^P) \phi^L_c(g^{-1}\bfx; G^L, \bfX^L) \, d^3\mathbf{x}\\
    &= \sum_c \int_{\mathbb{R}^3} \phi^P_c(\bfx'; G^P, \bfX^P) \phi^L_c(\bfx'; G^L, \bfX^L) \, d^3\mathbf{x}'
\end{align*}
where the last line has substitution $\bfx' = g^{-1}\bfx$ with $g$ volume preserving on $\mathbb{R}^3$.
\end{proof}

\subsection{Derivations}
In this section we describe the derivations for the various equations presented in the main text. We use the following convention for the (one-dimensional) Fourier transform and its inverse:
\begin{subequations}\begin{align}
\mathcal{F}[f](k) = \frac{1}{\sqrt{2\pi}} \int e^{-ikx} f(x) \, dx\\
\mathcal{F}^{-1}[f](x) = \frac{1}{\sqrt{2\pi}} \int e^{ikx} f(k) \, dk 
\end{align}\end{subequations}

\paragraph{Equation~\ref{eq:tr_fourier}} It is well known \citep{wiki_hankel} that given a function over $\mathbb{R}^3$ with complex spherical harmonic expansion
\begin{equation}
    f(\mathbf{r}) = \sum_{\ell,m} f_{\ell,m}(\lVert \mathbf{r} \rVert) Y_\ell^m(\mathbf{r} / \lVert \mathbf{r} \rVert)
\end{equation}
its Fourier transform is given by
\begin{equation}
    f(\bfk) = \sum_{\ell,m} (-i)^\ell  F_{\ell,m}(\lVert \bfk\rVert) Y_\ell^m(\bfk / \lVert\bfk\rVert)
\end{equation}
where 
\begin{equation}
    F_{\ell,m}(k) = \frac{1}{\sqrt{k}} \int_0^\infty \sqrt{r} f_{\ell, m}(r) J_{\ell + 1/2}(k r) \, r \, dr
\end{equation}
with $J_\ell$ the $\ell^\text{th}$-order Bessel function of the first kind. Relating these to the spherical Bessel functions $j_\ell$ via $J_{\ell + 1/2}(x) = \sqrt{2x/\pi} j_\ell(x)$, we obtain
\begin{equation}
    F_{\ell,m}(k) =  \sqrt{\frac{2}{\pi}} \int_0^\infty f_{\ell, m}(r) j_{\ell}(k r) \, r^2 \, dr \quad := \mathcal{H}_\ell[f_{\ell,m}](k)
\end{equation}
which is the form of Equation~\ref{eq:bessel}. To apply this to our scalar fields, we define the \emph{translation operator} $\mathcal{T}_\mathbf{r}[f](\bfx) = f(\bfx - \mathbf{r})$ and note its composition with the Fourier transform
\begin{equation}\label{eq:tr_operator}
    (\mathcal{F}\circ \mathcal{T}_\mathbf{r})[f] = e^{-i\bfk\cdot\mathbf{r}} \mathcal{F}[f]
\end{equation}
We then decompose the form of our scalar fields (Equation~\ref{eq:field}) into contributions from zero-origin spherical harmonic expansions
\begin{subequations}
\begin{align}
    \phi_c(\bfx) &= \sum_{n} \mathcal{T}_{\bfx_n}[\phi_{cn}](\bfx) \\
    \phi_{cn}(\bfx) &= \sum_{\ell,m}  \underbrace{\sum_j A_{c n j \ell m} R_j(\lVert \bfx \rVert) }_{\phi_{cn\ell m}(\lVert \bfx \rVert)}Y_\ell^m\left(\bfx/\lVert \bfx \rVert\right)
\end{align}
\end{subequations}
Hence, the Fourier transform of each contribution is
\begin{equation}
    \mathcal{F}[\phi_{cn}](\bfk/\lVert\bfk\rVert) = \sum_{\ell,m}(-i)^\ell \mathcal{H}_\ell[\phi_{cn\ell m}](\lVert \bfk\rVert) Y^m_\ell(\bfk/\lVert\bfk\rVert)
\end{equation}
Equation~\ref{eq:tr_fourier} is then obtained via Equation~\ref{eq:tr_operator} and the linearity of the Fourier and spherical Bessel transforms.

\paragraph{Equation~\ref{eq:rot_cc_fft}} We source (with some modifications) the derivation from \citet{kovacs2002fast}. We consider the cross-correlation
\begin{equation}
    c(R) = \int_{\mathbb{R}^3} \phi(\bfx) \overline{ \psi(R^{-1}\bfx) }\, d^3\bfx 
\end{equation}
which is the same as Equation~\ref{eq:rot_cc} with $\phi = \phi_c^P$ and $\psi = \phi^L_c$ since $\phi_c^L$ is a real field. Expanding in complex spherical harmonics $Y_\ell^m$ and radial bases $S_j$:
\begin{equation}
    \phi(\bfx) = \sum_{j,\ell,m} \Phi_{j\ell m}S_j(\lVert \bfx \rVert) Y_\ell^m(\bfx / \lVert \bfx \rVert) \quad \quad  \psi(\bfx) = \sum_{j,\ell,m} \Psi_{j\ell m}S_j(\lVert \bfx \rVert) Y_\ell^m(\bfx / \lVert \bfx \rVert) 
\end{equation}
We then obtain
\begin{subequations}\label{eq:cR_derive}
\begin{align}
    c(R) &= \sum_{j,j',\ell,\ell',m,n,m'} \overline{D_{nm'}^\ell(R)}\Phi_{j\ell m} \overline{\Psi_{j'\ell'm'}}\int_{\mathbb{R}^3} [S_j \cdot S_{j'}](\lVert\bfx \rVert)[Y_\ell^m \cdot \overline{Y_{\ell'}^n}](\bfx / \lVert \bfx \rVert) \, d^3\bfx\\
    &= \sum_{j,j',\ell,\ell',m,n,m'} \overline{D_{nm'}^\ell(R)}\Phi_{j\ell m} \overline{\Psi_{j'\ell'm'}}\underbrace{\int_0^\infty [S_j \cdot S_{j'}](r) \, r^2 \, dr}_{G_{jj'}} \underbrace{\int_{S^2} [Y_\ell^m \cdot \overline{Y_{\ell'}^n}](\hat{\mathbf{r}}) \, d\hat{\mathbf{r}}}_{\delta_{\ell\ell'}\delta_{mn}} \\
    &=  \sum_{\ell,m,m'} \overline{D_{mm'}^\ell(R)} \underbrace{\sum_{j,j'}\Phi_{j\ell m}\overline{\Psi_{j'\ell m'}}G_{jj'} }_{I^\ell_{mm'}}
\end{align}
\end{subequations}
Now to evaluate the complex Wigner $D$-matrix, we adopt the extrinsic $zyz$ convention for Euler angles (applied right-to-left) and note that any rotation $(\phi, \theta, \psi)$ can be decomposed as 
\begin{equation}
    R(\phi, \theta, \psi) = R_z(\underbrace{\phi - \pi/2}_\xi)R_y(\pi/2)R_z(\underbrace{\pi-\theta}_\eta)R_y(\pi/2)R_z(\underbrace{\psi-\pi/2}_\omega)
\end{equation}
Next, one can easily check (using the standard spherical harmonics) that the Wigner $D$-matrix for a rotation about the $z$-axis is diagonal and given by $D_{mn}^\ell(R_z(\omega)) = \delta_{mn} e^{-in\omega}$. Hence,
\begin{equation}
    D_{mn}^\ell(R(\phi, \theta, \psi)) = e^{-im\xi} d^\ell_{mh} e^{-h\eta} d^\ell_{hn}  e^{-i\omega n}
\end{equation}
where $d^\ell = D^\ell(R_y(\pi/2))$ are constant and real. Complex conjugation then gives Equation~\ref{eq:rot_cc_fft}.

\paragraph{Equation~\ref{eq:objective}} The conditional likelihood is
\begin{subequations}
\begin{align}
    \log p(\mathbf{t} \mid \bfX^C, R) &= \log \frac{p(\bfX^C, R, \mathbf{t})}{p(\bfX^C, R)}\\
    &= \log p(\bfX^C, R, \mathbf{t}) - \log\int_{\mathbb{R}^3} p(\bfX^C, R, \mathbf{t}') \, d^3\mathbf{t}' \\
    &= \log E(\bfX^P, \bfX^L) - \log\int_{\mathbb{R}^3} \exp\left[E(\bfX^P, R.\bfX^C + \mathbf{t}')\right] \, d^3\mathbf{t}' 
\end{align}
\end{subequations}
Similarly,
\begin{subequations}
\begin{align}
    \log p(R \mid \bfX^C, \mathbf{t}) &= \log \frac{p(\bfX^C, R, \mathbf{t})}{p(\bfX^C, \mathbf{t})}\\
    &= \log p(\bfX^C, R, \mathbf{t}) - \log\int_{SO(3)} p(\bfX^C, R', \mathbf{t}) \, dR' \\
    &= \log E(\bfX^P, \bfX^L) - \log\int_{SO(3)} \exp\left[E(\bfX^P, R.\bfX^C + \mathbf{t})\right] \, dR'
\end{align}
\end{subequations}
Finally, we move $\mathbf{t}$ to the protein coordinates (invoking the invariance of the score $E$) to obtain a form consistent with the rotational cross-correlations (Equation~\ref{eq:rot_cc}).

\paragraph{Equation~\ref{eq:tr_score}} Given a pose $\bfX^L = R.\bfX^C + \mathbf{t}$, we evaluate
\begin{equation}
    E(\bfX^P, R.\bfX^C + \mathbf{t}) = \sum_c \int_{\mathbb{R}^3} \phi^P_c(\bfx) \phi^L_c(\bfx; R.\bfX^C + \mathbf{t}) \, d^3\mathbf{x}
\end{equation}
The functional inner product is equivalent in Fourier space:
\begin{equation}\label{eq:fourier_inner_product}
    E(\bfX^P, R.\bfX^C + \mathbf{t}) = \sum_c \int_{\mathbb{R}^3} \overline{\mathcal{F}[\phi_c^P](\bfk)} \cdot \mathcal{F}[\phi_c^L(\,\cdot\,; R.\bfX^C + \mathbf{t})](\bfk) \; d^3\mathbf{k}
\end{equation}
Then with the translation operator $\mathcal{T}$ defined previously,
\begin{subequations}
\begin{alignat}{2}
    \phi^L_c(\bfx; R.\bfX^C+\mathbf{t}) &= \mathcal{T}_{\mathbf{t}}[\phi(\,\cdot\,; R.\bfX^C)](\bfx) \\
    \mathcal{F}[\phi_c^L(\,\cdot\,; R.\bfX^C + \mathbf{t})](\bfk) &= e^{-i\bfk\cdot\mathbf{t}}   \mathcal{F}[\phi_c^L(\,\cdot\,; R.\bfX^C)](\bfk)
\end{alignat}
\end{subequations}
We then substitute into Equation~\ref{eq:fourier_inner_product} to obtain Equation~\ref{eq:tr_score}.

\paragraph{Equation~\ref{eq:rot_score}} Given a pose $\bfX^L = R.\bfX^C + \mathbf{t}$, we assume that the field $\phi_c^P(\,\cdot\,; \bfX^P - \mathbf{t})$ and $\phi_c^L(\,\cdot\,; \bfX^C)$ are written in the real global spherical harmonic expansion:
\begin{subequations}
\begin{align}
    \phi_c^P(\bfx; \bfX^P - \mathbf{t}) &=\sum_{j, \ell, m} B^P_{cj\ell m} S_j(\lVert \bfx \rVert) Y_\ell^m(\bfx / \lVert\bfx\rVert)\\
    \phi_c^L(\bfx; \bfX^C) &=\sum_{j, \ell, m} B^L_{cj\ell m} S_j(\lVert \bfx \rVert) Y_\ell^m(\bfx / \lVert\bfx\rVert)
\end{align}
\end{subequations}
Then, analogously to Equation~\ref{eq:cR_derive},
\begin{subequations}
\begin{align}
    E(\bfX^P, R.\bfX^C + \mathbf{t}) &= E(\bfX^P - \mathbf{t}, R.\bfX^C) \\
    &=  \sum_c \int_{\mathbb{R}^3} \phi^P_c(\bfx; \bfX^P - \mathbf{t}) \phi^L_c(R^{-1}\bfx; \bfX^C) \, d^3\mathbf{x} \\
    &= \sum_{c,\ell,m,m'} {D_{mm'}^\ell(R)} \sum_{j,j'}B^P_{cj\ell m}{B^L_{cj'\ell m'}}G_{jj'} 
\end{align}
\end{subequations}
Complex conjugation has been omitted because the coefficients and $D$-functions are now real.

\newpage \section{Algorithmic Details} \label{app:algorithms}

Below, we present in detail the four inference procedures introduced in Section~\ref{sec:train_inf}. The three blocks of computations are color-coded corresponding to protein preprocessing (\textcolor{green!50!black}{green}), ligand preprocessing (\textcolor{blue}{blue}), and the core computation (\textcolor{red}{red}) and labelled with typical runtimes from Table~\ref{tab:computations} (unlabelled lines have negligible runtime). The various loop levels make clear that depending on the workflow, the protein and ligand processing precomputations can be amortized and approaches a negligible fraction of the total runtime. Note, however, that for readability we have presented the algorithms assuming that all possible combinations (i.e., of proteins, ligand conformers, rotations, and translations) are of interest; if this is not true (for example in PDBBind, or in any typical pose-scoring setting), then the full benefits of amortization may not be fully realized. 

\begin{algorithm}[H]
\caption{\textsc{Translational FFT}}\label{alg:tr_fft}
\KwIn{Proteins $\{(G^P_i, \bfX^P_i)\}$, conformers $\{(G^L_h, \bfX^L_h)\}$}
\KwOut{Docked poses $(\bfX^P_i, \bfX^L_{ih}) \; \forall i,h$}
\tikzmk{A}\ForEach(\tcp*[f]{protein preprocessing}){$(G^P_i, \bfX^P_i)$}{
    Compute coefficients $\mathbf{A}^P_i = \{A^P_{cjn\ell m}(G^P_i, \bfX^P_i)\}$ with neural network \tcp*{65 ms}
    Compute Fourier-space field values $\mathcal{F}[\phi^P]_i$ using $\mathbf{A}^P_i, \bfx^P_i$ \tcp*{7.0 ms}
}\tikzmk{B}\boxit{green}
\tikzmk{A}\ForEach(\tcp*[f]{ligand preprocessing}){$(G^L_h, \bfX^L_h)$}{
    Compute coefficients $\mathbf{A}^L_h = \{A^L_{cjn\ell m}(G^L_h, \bfX^L_h)\}$ with neural network \tcp*{4.3 ms}
    \ForEach{$R_k \in \{R\}_\text{grid} \subset SO(3)$}{ 
        Compute rotated coefficients $\mathbf{A}^L_{h,k}$ using $D^\ell(R_k)$\;   
        Compute Fourier-space field values $\mathcal{F}[\phi^L]_{h,k}$ using $\mathbf{A}^L_{h,k}, R_k\bfX^L_h$ \tcp*{1.6 ms}
    }
}\tikzmk{B}\boxit{blue!75}
\tikzmk{A}\ForEach(\tcp*[f]{pose optimization}){$(G^P_i, \bfX^P_i)$}{
    \ForEach{$(G^L_h, \bfX^L_h)$}{
        \ForEach{$R_k \in \{R\}_\text{grid} \subset SO(3)$}{ 
            Compute $E(\bfX_i^P, R_k\bfX^L_h + \mathbf{t}), \forall \mathbf{t}$ using FFT\tcp*{160 $\mu$s}
            $E^\star_k, \mathbf{t}^\star_k \gets \{\max, \argmax\}_\mathbf{t}E(\bfX_i^P, R_k\bfX^L_h + \mathbf{t})$ \;
        }
        $k^\star \gets \argmax_k E_k^\star$\;
        $\bfX^L_{ih} \gets R_{k^\star}\bfX^L_h + \mathbf{t}^\star_{k^\star}$\;
    }
}\tikzmk{B}\boxitex{red}
\end{algorithm}

\begin{algorithm}[H]
\caption{\textsc{Rotational FFT}}\label{alg:rot_fft}
\KwIn{Proteins $\{(G^P_i, \bfX^P_i)\}$, conformers $\{(G^L_h, \bfX^L_h)\}$}
\KwOut{Docked poses $(\bfX^P_i, \bfX^L_{ih}) \; \forall i,h$}
\tikzmk{A}\ForEach(\tcp*[f]{protein preprocessing}){$(G^P_i, \bfX^P_i)$}{
    Compute coefficients $\mathbf{A}^P_i = \{A^P_{cjn\ell m}(G^P_i, \bfX^P_i)\}$ with neural network \tcp*{65 ms}
    \For{$\mathbf{t}_k \in \{\mathbf{t}\}_\text{grid}\subset \mathbb{R}^3$}{
        Compute global expansion $\mathbf{B}^P_{i,k} = \{B_{cj\ell m}\}$ from $\mathbf{A}^P_i, \bfX^P_i - \mathbf{t}_k$ \tcp*{80 ms}
    }
}\tikzmk{B}\boxit{green}
\tikzmk{A}\ForEach(\tcp*[f]{ligand preprocessing}){$(G^L_h, \bfX^L_h)$}{
    Compute coefficients $\mathbf{A}^L_h = \{A^L_{cjn\ell m}(G^L_h, \bfX^L_h)\}$ with neural network \tcp*{4.3 ms}
    Compute global expansion $\mathbf{B}^L_h = \{B_{cj\ell m}\}$ from $\mathbf{A}^L_h, \bfX^L_h$ \tcp*{17 ms}
}\tikzmk{B}\boxit{blue!75}
\tikzmk{A}\ForEach(\tcp*[f]{pose optimization}){$(G^P_i, \bfX^P_i)$}{
    \ForEach{$(G^L_h, \bfX^L_h)$}{
        \ForEach{$\mathbf{t}_k \in \{\mathbf{t}\}_\text{grid}\subset \mathbb{R}^3$}{ 
            Compute $E(\bfX_i^P - \mathbf{t}_k, R.\bfX^L_h), \forall R$ using FFT \tcp*{650 $\mu$s}
            $E^\star_k, R^\star_k \gets \{\max, \argmax\}_R E(\bfX_i^P-\mathbf{t}_k, R.\bfX^L_h + \mathbf{t})$ \;
        }
        $k^\star \gets \argmax_k E_k^\star$\;
        $\bfX^L_{ih} \gets R^\star_{k^\star}\bfX^L_h + \mathbf{t}_{k^\star}$\;
    }
}\tikzmk{B}\boxitex{red}
\end{algorithm}

\begin{algorithm}[h]
\caption{\textsc{Translational Scoring}}\label{alg:tr_score}
\KwIn{Proteins $\{(G^P_i, \bfX^P_i)\}$, conformers $\{(G^L_h, \bfX^L_h)\}$, rotations $\{R_k\}$, translations $\{\mathbf{t}_\ell\}$}
\KwOut{ Scores $E(\bfX^P_i, R_k\bfX^L_h + \mathbf{t}_\ell) \; \forall i,h,k,\ell$}
\tikzmk{A}\ForEach(\tcp*[f]{protein preprocessing}){$(G^P_i, \bfX^P_i)$}{
    Compute coefficients $\mathbf{A}^P_i = \{A^P_{cjn\ell m}(G^P_i, \bfX^P_i)\}$ with neural network \tcp*{65 ms}
    Compute Fourier-space field values $\mathcal{F}[\phi^P]_i$ using $\mathbf{A}^P_i, \bfx^P_i$ \tcp*{7.0 ms}
}\tikzmk{B}\boxit{green}
\tikzmk{A}\ForEach(\tcp*[f]{ligand preprocessing}){$(G^L_h, \bfX^L_h)$}{
    Compute coefficients $\mathbf{A}^L_h = \{A^L_{cjn\ell m}(G^L_h, \bfX^L_h)\}$ with neural network \tcp*{4.3 ms}
    \ForEach{$R_k$}{ 
        Compute rotated coefficients $\mathbf{A}^L_{h,k}$ using $D^\ell(R_k)$\;   
        Compute Fourier-space field values $\mathcal{F}[\phi^L]_{h,k}$ using $\mathbf{A}^L_{h,k}, R_k\bfX^L_h$ \tcp*{1.6 ms}
    }
}\tikzmk{B}\boxit{blue!75}
\tikzmk{A}\ForEach(\tcp*[f]{scoring}){$(G^P_i, \bfX^P_i)$}{
    \ForEach{$(G^L_h, \bfX^L_h)$}{
        \ForEach{$R_k$}{
            \ForEach{$\mathbf{t}_\ell$}{
               Compute $E(\bfX^P_i, R_k\bfX^L_h + \mathbf{t}_\ell)$ using Equation~\ref{eq:tr_score}\tcp*{1.0 $\mu$s}
            }
        }
    }
}\tikzmk{B}\boxitex{red}
\end{algorithm}

\begin{algorithm}[h]
\caption{\textsc{Rotational Scoring}}\label{alg:rot_score}
\KwIn{Proteins $\{(G^P_i, \bfX^P_i)\}$, conformers $\{(G^L_h, \bfX^L_h)\}$, rotations $\{R_k\}$, translations $\{\mathbf{t}_\ell\}$}
\KwOut{ Scores $E(\bfX^P_i, R_k\bfX^L_h + \mathbf{t}_\ell) \; \forall i,h,k,\ell$}
\tikzmk{A}\ForEach(\tcp*[f]{protein preprocessing}){$(G^P_i, \bfX^P_i)$}{
    Compute coefficients $\mathbf{A}^P_i = \{A^P_{cjn\ell m}(G^P_i, \bfX^P_i)\}$ with neural network \tcp*{65 ms}
    \For{$\mathbf{t}_k \in \{\mathbf{t}\}_\text{grid}\subset \mathbb{R}^3$}{
        Compute global expansion $\mathbf{B}^P_{i,k} = \{B_{cj\ell m}\}$ from $\mathbf{A}^P_i, \bfX^P_i - \mathbf{t}_k$ \tcp*{80 ms}
    }
}\tikzmk{B}\boxit{green}
\tikzmk{A}\ForEach(\tcp*[f]{ligand preprocessing}){$(G^L_h, \bfX^L_h)$}{
    Compute coefficients $\mathbf{A}^L_h = \{A^L_{cjn\ell m}(G^L_h, \bfX^L_h)\}$ with neural network \tcp*{4.3 ms}
    Compute global expansion $\mathbf{B}^L_h = \{B_{cj\ell m}\}$ from $\mathbf{A}^L_h, \bfX^L_h$ \tcp*{17 ms}
}\tikzmk{B}\boxit{blue!75}
\tikzmk{A}\ForEach(\tcp*[f]{scoring}){$(G^P_i, \bfX^P_i)$}{
    \ForEach{$(G^L_h, \bfX^L_h)$}{
        \ForEach{$R_k$}{
            \ForEach{$\mathbf{t}_\ell$}{
               Compute $E(\bfX^P_i, R_k\bfX^L_h + \mathbf{t}_\ell)$ using Equation~\ref{eq:rot_score}\tcp*{8.2 $\mu$s}
            }
        }
    }
}\tikzmk{B}\boxitex{red}
\end{algorithm}

\clearpage
\section{Learned Scalar Fields}\label{app:fields}
\begin{figure}[h]
    \centering
    \includegraphics[width=\textwidth]{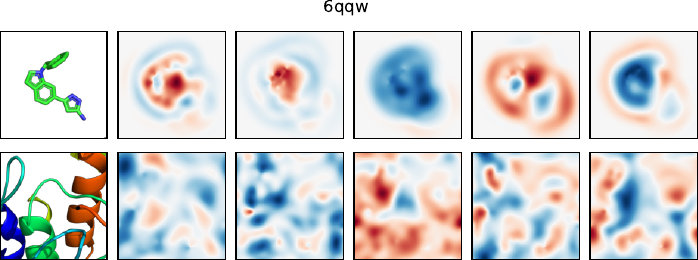}\vspace{6pt}
    \includegraphics[width=\textwidth]{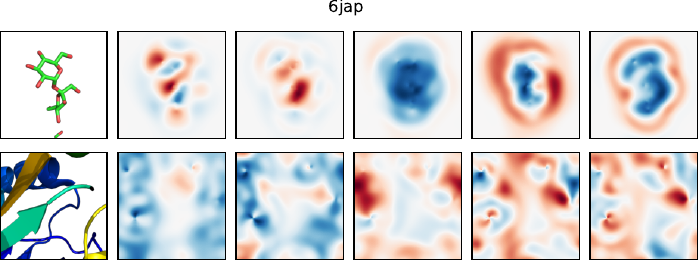}\vspace{6pt}
    \includegraphics[width=\textwidth]{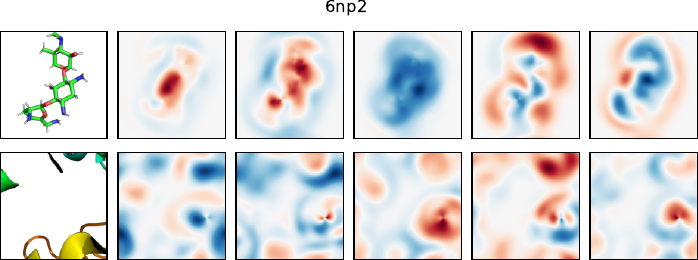}
    \caption{\textbf{Visualizations of learned scalar fields}. All five channels of the \textbf{ESF-N} learned scalar fields $\phi^L$ (top row) and $\phi^P$ (bottom row) are shown on the $xy$-plane passing through the center of mass of the ligand, with a box diameter of 20 \AA. Positive values of the field are in \textcolor{blue}{blue} and negative values in \textcolor{red}{red}. At left, the ligand and pocket structures are shown looking down the $z$-axis. Note that as the fields are only 2D slices, not all 3D features visible in the structures are visible in the fields.}
    \label{fig:fields1}
\end{figure}

\begin{figure}[h]
    \centering
    \includegraphics[width=\textwidth]{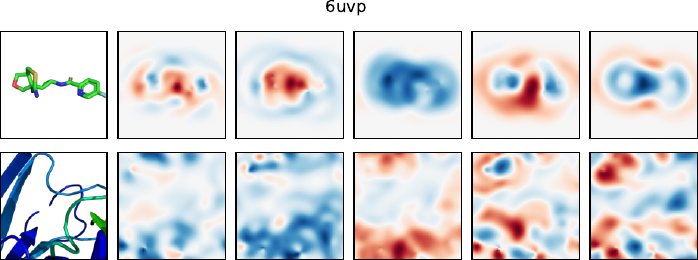}\vspace{6pt}
    \includegraphics[width=\textwidth]{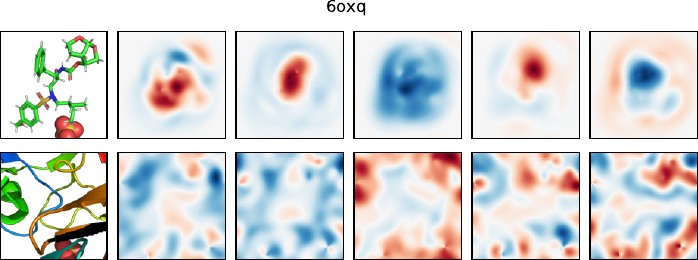}\vspace{6pt}
    \includegraphics[width=\textwidth]{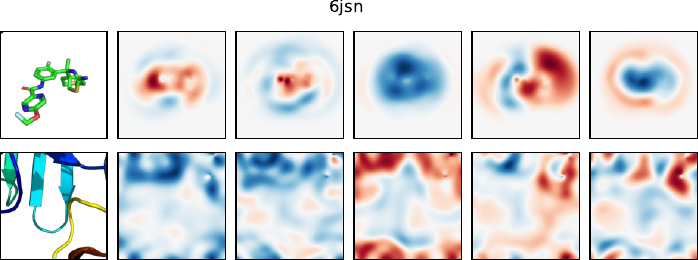}\vspace{6pt}
    \includegraphics[width=\textwidth]{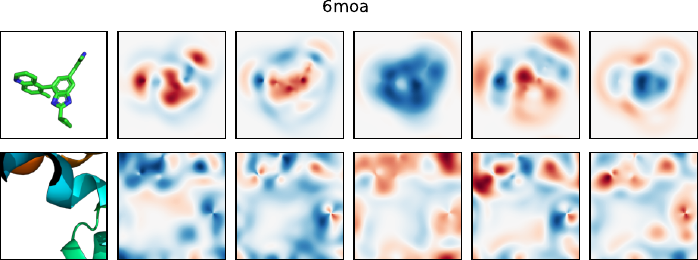}
    \caption{\textbf{Visualizations of learned scalar fields}, continued.}
    \label{fig:fields2}
\end{figure}
\clearpage

\section{Experimental Details} 
\subsection{Decoy Set} \label{app:decoys}

Given a zero-mean ground-truth ligand pose ${\bfX^L}^\star$, we generate $32^3 - 1 = 32767$ decoy poses via the following procedure. 
\begin{itemize}
    \item Sample 31 translational pertubations: $\mathbf{t}_i \sim \mathcal{N}(0, \mathbf{I}_3), i=1\ldots 31$ and set $\mathbf{t}_0 = \boldsymbol{0}$, with units in \AA.
    \item Sample 31 rotational perturbations: $R_j = \texttt{FromRotvec}(\mathbf{r}_j), \mathbf{r}_j \sim \mathcal{N}(0, 0.5\mathbf{I}_3), j=1\ldots 31$ and set $R_0 = \mathbf{I}_3$.
    \item Sample 31 noisy conformers $\bfX^C_k, k=1\ldots 31$ by sampling torsional updates $\Delta \boldsymbol{\tau}_k \sim \mathcal{N}_\mathbb{T}(0, (\pi/2)\mathbf{I}_m)$ where $\mathcal{N}_\mathbb{T}$ is a wrapped normal distribution \citep{jing2022torsional} and $m$ is the number of torsion angles. The torsional updates are applied to the smaller side of the molecule. Set $\bfX^C_0 = {\bfX^L}^\star$.
    \item Set $\bfX^L_{ijk} = R_j\bfX^C_k + \mathbf{t}_i, i, j, k = 0 \ldots 31$ and discard $\bfX^L_{000} = {\bfX^L}^\star$.
\end{itemize}
PDB ID 6A73 is excluded from the procedure due to the high level of graph symmetry and significant runtime for computing RMSDs for all decoys. Summary statistics for the decoy sets of the remaining 362 PDB IDs are presented in Figure~\ref{fig:decoy_set_stats}.

\begin{figure}[h]
    \centering
    \includegraphics[width=\textwidth]{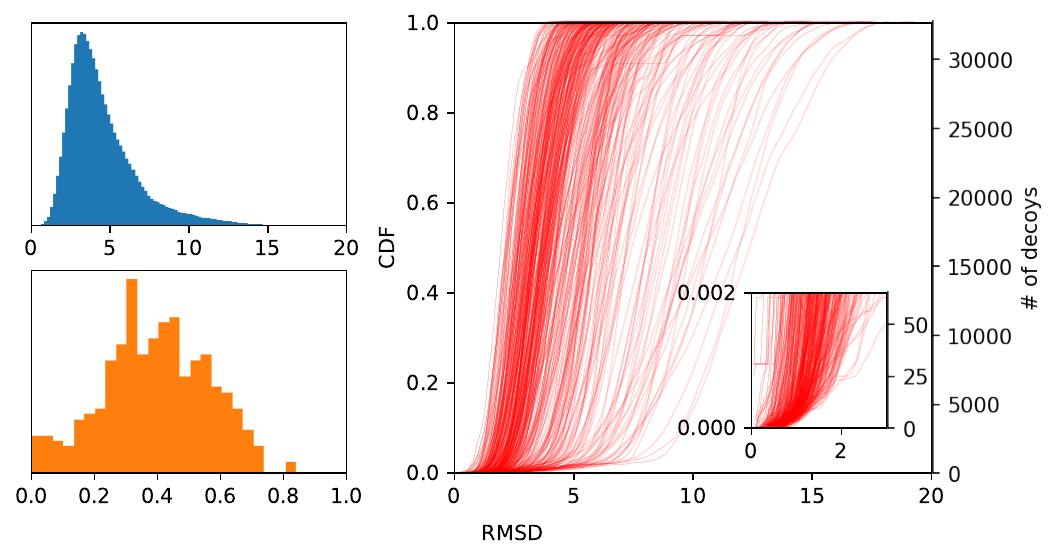}
    \caption{\textbf{Decoy set statistics.} \emph{Top left}: histogram of RMSDs across all decoys sets (12 M total). \emph{Bottom left}: histogram of minimum RMSDs among the decoy sets. All sets have a pose less than RMSD $<$1 \AA\ from the true pose. \emph{Right}: cumulative density function of RMSDs in each decoy set. \emph{Bottom right inset}: all decoy sets have at least 23 poses with RMSD $<$2 \AA.}
    \label{fig:decoy_set_stats}
\end{figure}

\subsection{Runtime Measurements} \label{app:runtime_measurements}

All runtime measurements were performed on a machine with 64 Intel Xeon Gold 6130 CPUs and 8 Nvidia Tesla V100 GPUs. Gnina was run with default thread count settings. All of our processes were run on a single V100 GPU. For our method, we performed runtime analysis using CUDA events to remove the effects of asynchronous CUDA execution. Script loading, model loading, and algorithmic-level precomputations (which, if necessary, can be cached on disk) were excluded from the analysis. For Gnina, we attempted to remove similar overhead by timing single-pose scoring-only runs as representative of constant overhead costs. We report conformer docking runtimes in Table~\ref{tab:docking} using the PDBBind crystal structures; ESMFold runtimes are marginally shorter. Typical runtimes reported in Table~\ref{tab:computations} and Appendix~\ref{app:algorithms} are obtained from timing runs with our method across the entire PDBBind test set.

\subsection{Datasets} \label{app:datasets}
As noted previously, we use train, validation, and test splits from \citet{equibind}. However, due to RDKit parsing issues with Gnina-docked poses, the following 30 complexes are excluded (leaving 333 remaining) from all rigid conformer docking comparisons against Gnina: 6HZB, 6E4C, 6PKA, 6E3P, 6OXT, 6OY0, 6HZA, 6E6W, 6OXX, 6HZD, 6K05, 6NRH, 6OXW, 6RTN, 6D3Z, 6HLE, 6PY0, 6OXS, 6E3O, 6HZC, 6Q38, 6E7M, 6OIE, 6D3Y, 6D40, 6UHU, 6CJP, 6E3N, 6Q4Q, 6D3X. Scoring comparisons include all test complexes except 6A73, for which decoy poses could not be generated.

We download the 77 PDB IDs provided in \citet{tosstorff2022high} from the PDB to form the PDE10A dataset, keeping the A chain of each assymetric unit and the Ligand of Interest (LOI) interacting with it. We then align all ligands to the crystal structure of 5SFS using the procedure described in \citet{corso2023diffdock} for aligning ESMFold structures, except transforming the ligand rather than the protein. This constitutes the construction of a \emph{cross-docking} dataset due to the use of the same pocket for all ligands. Due to RDKit parsing errors with the Gnina-docked poses, the following 7 PDB IDs are excluded from all comparisons: 5SFA, 5SED, 5SFO, 5SEV, 5SF9, 5SDX, 5SFC. The remaining 70 ligands are shown superimposed on the 5SFS pocket in Figure~\ref{fig:pde10a}.

\begin{figure}[h]
    \centering
    \includegraphics[width=0.5\textwidth]{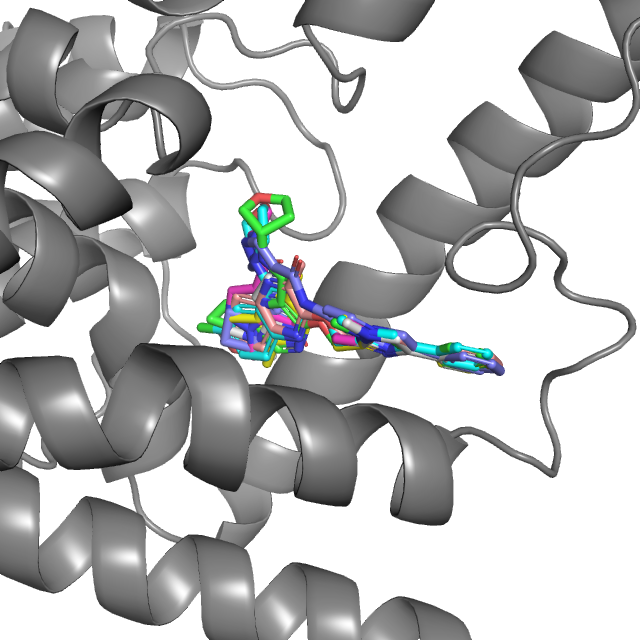}
    \caption{\textbf{PDE10A ligands} aligned on 5SFS}
    \label{fig:pde10a}
\end{figure}
\clearpage
\section{Hyperparameters} \label{app:hyperparams}

\paragraph{Model hyperparameters} The main model hyperparameters are the aspects of the scalar field parameterization (Equation~\ref{eq:field}) which dictate the number of coefficients $A_{cnj\ell m}$ predicted by the neural network. Intuitively, these determine the expressivity or ``intrinsic resolution" of the scalar field which can be parameterized by the ESF neural network. These hyperparameters and their search spaces are detailed in Table~\ref{tab:model_hparams}, with full supporting experimental results in Appendix~\ref{app:model_hparam_results}. 

Our data featurization and E3NN neural network architecture are largely adapted from the default settings of \citet{corso2023diffdock}. Both the protein and ligand neural networks operate on all heavy atom nodes; however, in the protein network only the alpha-carbons nodes emit $A_{cnj\ell m}$ coefficients that contribute to the scalar field. For simplicity, we also omit ESM features.

\begin{table}[h]
\caption{Training-time model hyperparameters and their search spaces.  The parameters corresponding to experiments in the main text are in bold.}
\label{tab:model_hparams}
    \begin{center}
    \begin{tabular}{lc}
    \toprule
    Parameter & Search Space  \\    
    \midrule
    \# scalar field channels & 3, \textbf{5}, 8   \\
    \# radial basis functions & 3, \textbf{5}, 8 \\
    RBF cutoff radius & 3 \AA, \textbf{5 \AA}, 8 \AA \\
    Order of spherical harmonics  & 1, \textbf{2}, 3 \\
    \bottomrule
    \end{tabular}
    \end{center}
\end{table}

\paragraph{Inference hyperparameters} The primary inference-time hyperparameters control the ``search resolution" over the FFT and grid-search degrees of freedom. For the grid-search, the resolution is controlled directly by the selection of grid points. For the FFT, the resolution is effectively controlled by frequency domain representation of the scalar field. Specifically,
\begin{itemize}
    \item In the \textbf{translational} case, this resolution is expressed in terms of a sampling resolution which corresponds to the grid of frequency vectors at which the translational Fourier coefficients (Equation~\ref{eq:tr_fourier}) are evaluated. To generate these frequency vectors, we use a 40 \AA\ cubical domain with periodic boundary conditions.
    \item In the \textbf{rotational} case, this resolution is given by the maximum order of spherical harmonics in the global spherical basis (Equation~\ref{eq:global}) used for evaluating rotational cross-correlations.
\end{itemize}

The search space for these resolution hyperpameters is detailed in Table~\ref{tab:inf_hparams}. Increasing the search resolution improves accuracy but at the cost of core optimization runtime. The default settings chosen for the main experiments are based on a qualitative assessment of the best tradeoff. Figure~\ref{fig:hparam_grid} visualizes this tradeoff, with full supporting experimental results in Appendix~\ref{app:inf_hparam_results}.

\begin{table}[h]
\caption{Inference-time model hyperparameters and their search spaces. The search grid points over $SO(3)$ are generated according to \citet{zhong2019reconstructing, yershova2010generating}. The translational search grid points are defined by evenly spacing the stated number of points along each dimension of a 8 \AA\ cube. The parameters corresponding to experiments in the main text are in bold.}
    \label{tab:inf_hparams}
    \begin{center}
    \begin{tabular}{clc}
    \toprule
    Procedure & Parameter & Search Space  \\    
    \midrule
    \multirow{2}{*}{\textbf{TF}} & Translational signal resolution & 0.5 \AA, \textbf{1 \AA} \\
    & Rotataional search grid points  & 576, \textbf{4608} \\
    \midrule
    \multirow{2}{*}{\textbf{RF}} & Rotational signal order & 10, \textbf{25}, 50 \\
    & Translational search grid points & $7^3$, $\mathbf{9^3}$, $13^3$ \\
    \bottomrule
    \end{tabular}
    \end{center}
\end{table}

\begin{figure}
    \centering
    \setlength{\fboxrule}{1pt}
    \includegraphics[width=0.8\textwidth]{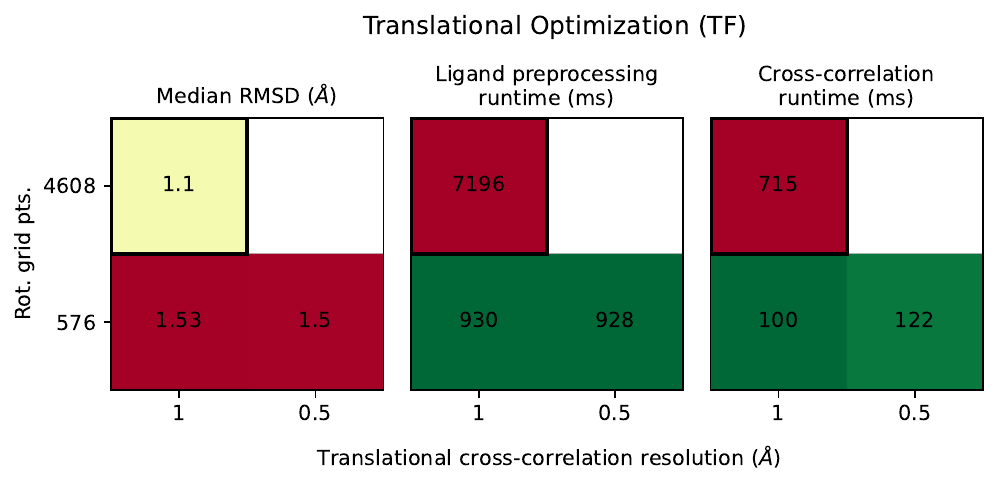}
    \includegraphics[width=0.8\textwidth]{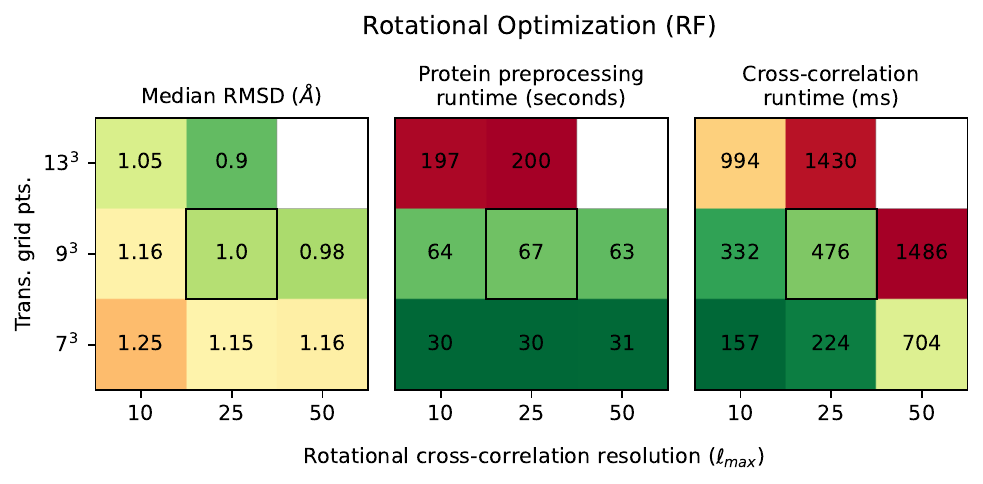}
    \caption{Inference-time hyperparameters and their effect on performance and average runtime. In the \textbf{TF} procedure (\emph{top}), the protein preprocessing time is roughly constant and is not shown; similarly in the \textbf{RF} procedure (\emph{bottom}) the ligand preprcessing time is constant. The selected default settings are boxed. All numbers are taken from full results on PDBBind in Appendix~\ref{app:inf_hparam_results}}
    \label{fig:hparam_grid}
\end{figure}

\clearpage

\section{Runtime Amortization} \label{app:amortize}

In this section we discuss in greater detail the practical implications of runtime amortization for various common workflows in molecular docking, focusing on rigid conformer docking as a proxy task. Conceptually, the total runtime of a docking workflow is given by
\begin{equation}
    C_\text{prot} \cdot \begin{pmatrix}\text{\# unique} \\\text{proteins}\end{pmatrix} + C_\text{lig} \begin{pmatrix}\text{\# unique ligand} \\\text{conformers}\end{pmatrix} + C_\text{pair} \cdot \left(\text{\# complexes}\right)
\end{equation}
where $C_\text{prot}, C_\text{lig}, C_\text{pair}$ refer to the protein precomputations, ligand precomputations, and cross-correlation (i.e., pose optimization) runtimes, respectively. If the number of complexes grows faster than either the number of unique proteins or ligand conformers individually, then the runtime contributions from the latter can be \emph{amortized} across the complexes---meaning that the unit cost per complex \emph{decreases} as their number increases.

To illustrate this amortization, suppose that we have $N$ protein pockets (for simplicity we assume each is on a seperate protein) and $M$ rigid conformers and wish to dock all $MN$ complexes. Then, following the exposition and typical runtimes provided in Section~\ref{sec:train_inf}, we can compute (Table~\ref{tab:amortize_compute}) the total docking cost of this workflow for the \textbf{TF} procedure:
\begin{equation} \label{eq:tf_cost}
    (72\text{ ms})\cdot M + (7400\text{ ms}) \cdot N + (740\text{ ms}) \cdot MN 
\end{equation}
Similarly, the total cost under the \textbf{RF} procedure is:
\begin{equation} \label{eq:rf_cost}
    (58400\text{ ms}) \cdot M + (21\text{ ms})\cdot N + (470\text{ ms}) \cdot MN
\end{equation}

Note that $C_\text{prot} \ll C_\text{lig}$ for the \textbf{TF} procedure, whereas $C_\text{prot} \gg C_\text{lig}$ for the \textbf{RF} procedure. This is because that the grid search over \emph{rotations} in \text{TF} involves a new set of translational Fourier coefficients for each rotation of the ligand, while the protein is kept fixed; on the other hand, the grid search over \emph{translations} in \textbf{RF} requires a new global spherical expansion $B_{nj\ell m}$ of the protein around each grid point, but only once for the ligand around its center of mass. This discrepancy has significant implications for the suitability of each procedure depending on the behavior of $M, N$:
\begin{itemize}
    \item In the \textbf{virtual screening}-like setting, we keep a fixed protein ($M=1$) and let the number of ligands increase ($N\rightarrow \infty$). Hence, the \textbf{TF} procedure is unsuitable, as the cost per ligand (and thus per complex) is a constant $\approx$8 seconds, approximately on the same order as traditional docking methods. On the other hand, the cost per ligand under the \textbf{RF} procedure begins at nearly $\approx$60 s with a single ligand, but asymptotically approaches $\approx$500 ms as the number of ligands increases (Figure~\ref{fig:amortize}, \emph{left}).
    \item In the \textbf{inverse-screening}-like setting, we keep a fixed ligand ($N=1$) and let the number of proteins increase ($M\rightarrow \infty$). Now the \textbf{RF} procedure is unsuitable, as the cost per protein is a constant $\approx$60 s, which is far too high to be useful. On the under hand, the cost per protein under the \textbf{TF} procedure begins at $\approx$8 s and asymptotically approaches $\approx$800 ms as the number of proteins increases (Figure~\ref{fig:amortize}, \emph{right}).
    \item In the \textbf{general setting}, we divide both Equation~\ref{eq:tf_cost} and \ref{eq:rf_cost} by $MN$ to find that when $M \gg N$ (i.e., more proteins than ligands), the \textbf{TF} procedure is more efficient, and when $M \ll N$ (more ligands than proteins), the \textbf{RF} procedure is more efficient. However, both procedures converge to similar per-complex runtimes (740 ms vs 470 ms) in the limit of $M, N \rightarrow \infty$.
\end{itemize}

{
\setlength\tabcolsep{4 pt}
\begin{table}
    \caption{Computing the runtime of the \textbf{TF} and \textbf{RF} procedures as a function of the number of proteins $M$ and ligands $N$. $T$ and $R$ are the number of points used in the grid search over rotations and translations for \textbf{TF} and \textbf{RF}, respectively; they are fixed constants on the order of $\approx$1000 (we use $T=729$, $R=4608$ by default). The rows in \textcolor{red}{red} contribute the greatest preprocessing runtime.}
    \label{tab:amortize_compute}
    \centering
    \vspace{12pt}
    \begin{minipage}{0.49\textwidth}    
    \begin{center}
    \textbf{Translational Optimization (TF)}
    \end{center}
    \begin{tabular}{lcrl}
    {Computation} & & Count & Runtime \\
    \midrule
    Protein scalar field & & $M\times$ & 65 ms\\
    Protein FFT coeffs & & $M \times$ & 7.0 ms\\
    Ligand scalar field & & $N\times $ & 4.3 ms\\
    \rowcolor[rgb]{1,0.8,0.8} Ligand FFT coeffs & & $RN\times$ & 1.6 ms \\
    Cross-correlation & $+$ & $RMN\times$ & 160 $\mu$s\\
    \midrule
    & $=$ & \multicolumn{2}{r}{(Equation~\ref{eq:tf_cost})}\\ 
    \end{tabular}
    \end{minipage}
    \hfill\vline\hfill
    \begin{minipage}{0.49\textwidth}    
    \begin{center}
    \textbf{Rotational Optimization (RF)}
    \end{center}
    
    \begin{tabular}{lcrl}
    {Computation} & & Count & Runtime \\
    \midrule
    Protein scalar field & & $M\times$ & 65 ms\\
    \rowcolor[rgb]{1,0.8,0.8} Protein global coeffs & & $TM \times$ & 80 ms\\
    Ligand scalar field & & $N\times $ & 4.3 ms\\
    Ligand global coeffs & & $N\times$ & 17 ms \\
    Cross-correlation & $+ $ & $TMN\times$ & 650 $\mu$s\\
    \midrule
    & $=$ & \multicolumn{2}{r}{(Equation~\ref{eq:rf_cost})}\\ 
    \end{tabular}
    \end{minipage}
    
\end{table}
}

\begin{figure}
    \centering
    \setlength{\fboxrule}{1pt}
    \includegraphics[width=0.8\textwidth]{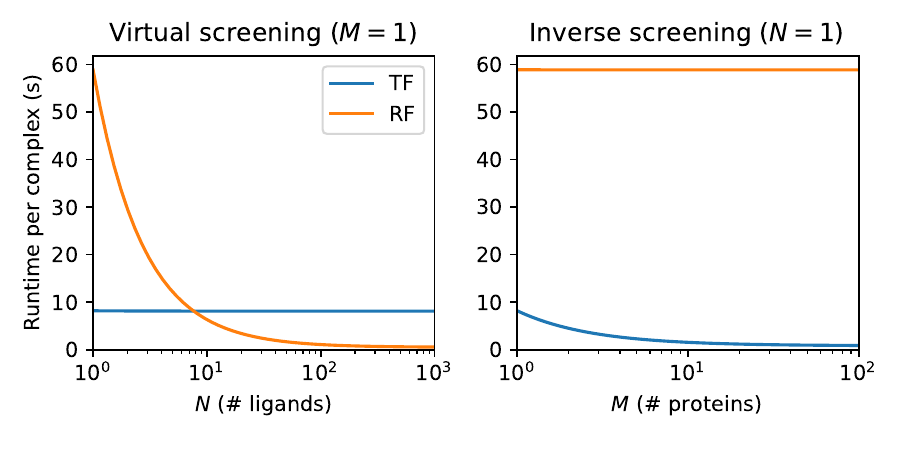}
    \caption{Amortized conformer docking runtimes per protein-ligand complex in the virtual screening (\emph{left}) and inverse screening (\emph{right}) settings.}
    \label{fig:amortize}
\end{figure}

\clearpage
\section{Additional Results}
\subsection{Model Hyperparameters}\label{app:model_hparam_results}

In Table~\ref{tab:model_hparams_results}, we explore the impact of varying model hyperparameters that determine the expressivity of the scalar field from their default values. We use the \textbf{ESF-N} model and evaluate on the rigid conformer docking task on PDBBind crystal structures. As the results illustrate, increasing the expressivity of the scalar field can bring additional improvements in performance. On the other hand, decreasing the expressivity degrades the performance of our method to the extent that it is no longer competitive with the baselines (Table~\ref{tab:docking}).

\begin{table}[h]
    \caption{Rigid conformer docking results for varying model hyperparameters. Runtimes are shown averaged over complexes, excluding / including pre-computations that can be amortized.}
    \label{tab:model_hparams_results}
    \centering
    \begin{tabular}{lcccccc}
    \toprule
    & \multicolumn{3}{c}{\textbf{TF}} & \multicolumn{3}{c}{\textbf{RF}}  \\
    \cmidrule(lr){2-4} \cmidrule(lr){5-7} 
    Method  & \makecell{\%\\$<$2 \AA} & \makecell{Med.\\RMSD} & Runtime & \makecell{\%\\$<$2 \AA} & \makecell{Med.\\RMSD} & Runtime \\
    \midrule
    Baseline & 72 & 1.10 & 0.7 s / 8.2 s & 73 & 1.00 & 0.5 s / 68 s\\
    \# channels $=3$ & 70 & 1.07 & 0.8 s / 6.5 s & 71 & 0.98 & 0.4 s / 70 s\\
    \# channels $=8$ & 71 & 1.15 & 0.9 s / 12 s & 70 & 1.09 & 0.7 s / 71 s\\
    \# RBFs $=3$ & 63 & 1.27 & 0.9 s / 8.0 s & 64 & 1.13 & 0.5 s / 70 s\\
    \# RBFs $=8$ & 77 & 1.07 & 0.8 s / 9.3 s & 76 & 0.95 & 0.5 s / 74 s\\
    RBF cutoff $=3$ \AA & 72 & 1.16 & 0.8 s / 8.4 s & 72 & 1.00 & 0.5 s / 68 s\\
    RBF cutoff $=8$ \AA & 76 & 1.16 & 0.8 s / 8.4 s & 71 & 1.08 & 0.5 s / 69 s\\
    Spherical harmonics $\ell_\text{max}=1$ & 62 & 1.35 & 0.9 s / 7.6 s & 62 & 1.23 & 0.5 s / 62 s\\
    Spherical harmonics $\ell_\text{max}=3$ & 80 & 0.99 & 0.9 s / 10 s & 77 & 0.87 & 0.5 s / 81 s\\
    \bottomrule
    \end{tabular}
\end{table}

\subsection{Inference Hyperparameters} \label{app:inf_hparam_results}
In Tables~\ref{tab:pdbbind_tf}--\ref{tab:pde10a_rf} below, we explore the impact of inference-time hyperparameters on the performance and runtime of our method on the rigid conformer docking task. We use the \textbf{ESF-N} model variant and experiment with the PDBBind crystal test set and PDE10A test set. As discussed in Appendix~\ref{app:hyperparams}, for the \textbf{TF} procedure, we adjust (1) the number of grid points over $SO(3)$ (576 or 4608); and (2) the spatial resolution for translational cross-correlation (either 1 \AA\ or 0.5 \AA). For the \textbf{RF} procedure, we adjust (1) the number of translational grid points in the 8 \AA\ cube ($7^3$, $9^3$, or $13^3$); and (2) the resolution for the rotational cross-correlation, expressed in terms of the maximal spherical harmonic order (10, 25, or 50). The rows corresponding to results in the main Table~\ref{tab:docking} are bolded.

In all rows, the effective number of poses searched over via both degrees of freedom is computed. To provide an idea of the impact of discretization, we compute the median RMSD of the \emph{closest grid point} to the ground-truth pose (decomposed into rotational and translational contributions). This serves as a hard lower bound for the median RMSD of the output docked pose. In the  \textbf{TF} procedure, increasing the resolution is memory-intensive; thus, the \textbf{RF} procedure is more effective at leveraging FFT to conduct fine-grained search over the accelerated degree of freedom. The default reported performance is attained with a translational offset of 0.4 \AA\ and a rotational offset of 0.16 \AA. While performance improves with smaller grid offsets, the returns are rapidly dimishing.

The runtime of the method (averaged over 333 PDBBind complexes and 70 PDE10A complexes) is reported and color-coded according to Appendix~\ref{app:algorithms}: protein preprocessing (\textcolor{green!50!black}{green}), ligand preprocessing (\textcolor{blue}{blue}), and the pose optimization (\textcolor{red}{red}). The effect of the non-FFT grid resolution is also color-coded, i.e., in \textbf{TF} the explicit enumeration over $SO(3)$ grid points directly scales the ligand preprocessing, whereas in \textbf{RF} the enumeration over $\mathbb{R}^3$ scales the protein preprocessing. As the tables show, the preprocessing of these explicit grid points contributes to the majority of the non-amortizeable runtime. In general, the $SO(3)$ grid / ligand preprocessing in \textbf{TF} is less expensive, however, it cannot be amortized when moving from PDBBind to PDE10A (where the ligands are still distinct). On the other hand, the $\mathbb{R}^3$ grid / protein preprocessing time in \textbf{RF} is significantly reduced (very roughly on the order of 70-fold, as expected) in PDE10A compared to PDBBind.

\newpage

\begin{table}
    \caption{\textbf{PDBBind TF}}
    \label{tab:pdbbind_tf}
    \begin{center}
    \begin{tabular}{c>{\columncolor[RGB]{230, 230, 255}}ccccccc>{\columncolor[RGB]{210, 255, 210}}c>{\columncolor[RGB]{230, 230, 255}}c>{\columncolor[RGB]{255, 210, 210}}c}
    \toprule
    \multicolumn{3}{c}{} & \multicolumn{3}{c}{Grid offset} & & & \multicolumn{3}{c}{Runtime (ms)}\\
    \cmidrule(lr){4-6} \cmidrule(lr){9-11}
    \makecell{Trans.\\resol.} & \makecell{$SO(3)$\\grid pts.} & \makecell{Effective\\ \# poses} & Tr. & Rot. & All & \makecell{Med.\\RMSD} & \makecell{\%\\ $<$2\AA} & \makecell{Prot.\\prep.} & \makecell{Lig.\\prep.} & Opt. \\ 
    \midrule
    1 \AA & 576 & 420k & 0.52 & 0.84 & 0.98 & 1.53 & 63 & 65 & 931 & 100 \\
    \textbf{1 \AA} & \textbf{4608} & \textbf{3.4M} & \textbf{0.50} & \textbf{0.42} & \textbf{0.67} & \textbf{1.10} & \textbf{72} & \textbf{72} & \textbf{7196} & \textbf{715} \\
    0.5 \AA & 576 & 2.8M & 0.25 & 0.80 & 0.84 & 1.50 & 64 & 70 & 928 & 123 \\
    \bottomrule
    \end{tabular}
    \end{center}
\end{table}

\begin{table}
    \caption{\textbf{PDBBind RF}}
    \label{tab:pdbbind_rf}
    \begin{center}
    \begin{tabular}{>{\columncolor[RGB]{210, 255, 210}}cccccccc>{\columncolor[RGB]{210, 255, 210}}c>{\columncolor[RGB]{230, 230, 255}}c>{\columncolor[RGB]{255, 210, 210}}c}
    \toprule
    \multicolumn{3}{c}{}  & \multicolumn{3}{c}{Grid offset} & & & \multicolumn{3}{c}{Runtime (ms)}\\
    \cmidrule(lr){4-6} \cmidrule(lr){9-11}
    \makecell{Trans.\\grid pts.} & \makecell{$SO(3)$\\$\ell_\text{max}$} & \makecell{Effective\\ \# poses} & Tr. & Rot. & All & \makecell{Med.\\RMSD} & \makecell{\%\\ $<$2\AA} & \makecell{Prot.\\prep.} & \makecell{Lig.\\prep.} & Opt. \\ 
    \midrule
    $7^3$ & 10 & 3.2M & 0.65 & 0.38 & 0.80 & 1.25 & 70 & 30k & 85 & 158 \\
    $7^3$ & 25 & 45M & 0.67 & 0.15 & 0.70 & 1.15 & 69 & 31k & 87 & 225 \\
    $7^3$ & 50 & 353M & 0.65 & 0.08 & 0.67 & 1.16 & 70 & 32k & 85 & 704 \\
    $9^3$ & 10 & 6.8M & 0.49 & 0.36 & 0.64 & 1.16 & 73 & 64k & 85 & 333 \\
    $\mathbf{9^3}$ & \textbf{25} & \textbf{97M} & \textbf{0.50} & \textbf{0.15} & \textbf{0.53} & \textbf{1.00} & \textbf{73} & \textbf{67k} & \textbf{87} & \textbf{476} \\
    $9^3$ & 50 & 751M & 0.51 & 0.08 & 0.52 & 0.98 & 74 & 63k & 84 & 1487 \\
    $13^3$ & 10 & 20M & 0.33 & 0.37 & 0.51 & 1.05 & 74 & 198k & 85 & 995 \\
    $13^3$ & 25 & 291M & 0.33 & 0.15 & 0.37 & 0.90 & 72 & 200k & 86 & 1430  \\
    \bottomrule
    \end{tabular}
    \end{center}
\end{table}

\begin{table}
    \caption{\textbf{PDE10A TF}}
    \label{tab:pde10a_tf}
    \begin{center}
    \begin{tabular}{c>{\columncolor[RGB]{230, 230, 255}}ccccccc>{\columncolor[RGB]{210, 255, 210}}c>{\columncolor[RGB]{230, 230, 255}}c>{\columncolor[RGB]{255, 210, 210}}c}
    \toprule
    \multicolumn{3}{c}{}  & \multicolumn{3}{c}{Grid offset} & & & \multicolumn{3}{c}{Runtime (ms)}\\
    \cmidrule(lr){4-6} \cmidrule(lr){9-11}
    \makecell{Trans.\\resol.} & \makecell{$SO(3)$\\grid pts.} & \makecell{Effective\\ \# poses} & Tr. & Rot. & All & \makecell{Med.\\RMSD} & \makecell{\%\\ $<$2\AA} & \makecell{Prot.\\prep.} & \makecell{Lig.\\prep.} & Opt. \\ 
    \midrule
    1 \AA & 576 & 420k & 0.51 & 0.88 & 1.00 & 1.85 & 56 & 22 & 761 & 89 \\
    \textbf{1 \AA} & \textbf{4608} & \textbf{3.4M} & \textbf{0.50} & \textbf{0.48} & \textbf{0.69} & \textbf{1.11} & \textbf{64} & \textbf{21} & \textbf{6159} & \textbf{736} \\
    0.5 \AA & 576 & 2.8M & 0.26 & 0.89 & 0.93 & 2.05 & 50 & 20 & 756 & 106 \\
    0.5 \AA & 4608 & 23M & 0.26 & 0.44 & 0.51 & 1.00 & 73 & 20 & 6147 & 892 \\
    \bottomrule
    \end{tabular}
    \end{center}
\end{table}

\begin{table}
    \caption{\textbf{PDE10A RF}}
    \label{tab:pde10a_rf}
    \begin{center}
    \begin{tabular}{>{\columncolor[RGB]{210, 255, 210}}cccccccc>{\columncolor[RGB]{210, 255, 210}}c>{\columncolor[RGB]{230, 230, 255}}c>{\columncolor[RGB]{255, 210, 210}}c}
    \toprule
    \multicolumn{3}{c}{}  & \multicolumn{3}{c}{Grid offset} & & & \multicolumn{3}{c}{Runtime (ms)}\\
    \cmidrule(lr){4-6} \cmidrule(lr){9-11}
    \makecell{Trans.\\grid pts.} & \makecell{$SO(3)$\\$\ell_\text{max}$} & \makecell{Effective\\ \# poses} & Tr. & Rot. & All & \makecell{Med.\\RMSD} & \makecell{\%\\ $<$2\AA} & \makecell{Prot.\\prep.} & \makecell{Lig.\\prep.} & Opt. \\ 
    \midrule
    $7^3$ & 10 & 3.2M & 0.72 & 0.38 & 0.83 & 1.60 & 54 & 476 & 44 & 161 \\
    $7^3$ & 25 & 45M & 0.57 & 0.16 & 0.59 & 1.21 & 63 & 549 & 42 & 227 \\
    $7^3$ & 50 & 353M & 0.65 & 0.08 & 0.65 & 1.30 & 64 & 635 & 59 & 718 \\
    $9^3$ & 10 & 6.8M & 0.46 & 0.39 & 0.63 & 1.05 & 64 & 1014 & 42 & 327 \\
    $\mathbf{9^3}$ & \textbf{25} & \textbf{97M} & \textbf{0.48} & \textbf{0.16} & \textbf{0.51} & \textbf{1.00} & \textbf{70} & \textbf{946} & \textbf{43} & \textbf{465} \\
    $9^3$ & 50 & 751M & 0.49 & 0.09 & 0.50 & 0.99 & 64 & 943 & 42 & 1483 \\
    $13^3$ & 10 & 20M & 0.34 & 0.41 & 0.55 & 1.17 & 64 & 2798 & 42 & 986 \\
    $13^3$ & 25 & 291M & 0.33 & 0.16 & 0.36 & 0.96 & 69 & 2912 & 45 & 1469 \\
    \bottomrule
    \end{tabular}
    \end{center}
\end{table}

\clearpage
\subsection{Performance-Runtime Tradeoff} \label{app:pareto}
In Figure~\ref{fig:pareto}, we further investigate the tradeoff between speed and performance offered by our method compared to Gnina (with the Vina scoring function). While in the main results (Table~\ref{tab:docking}) we run Gnina using all default settings, it is possible to reduce the runtime (and performance) by adjusting these settings. In particular, we explore setting \verb|--max_mc_steps| and \verb|--minimize_iters| to 5 independently and in combination. Together with the default runs and the \verb|--score_only| runs, these trace out a \emph{Pareto frontier} representing the tradeoff between runtime per complex and $<$2 \AA\ RMSD success rate. With the default settings, Gnina outperforms all variants of our method on the PDBBind crystal and PDE10A test sets. However, Figure~\ref{fig:pareto} shows that we can reach previously inaccessible regions in the accuracy v.s. runtime tradeoff landscape.

\begin{figure}[h]
    \centering
    \includegraphics[width=\textwidth]{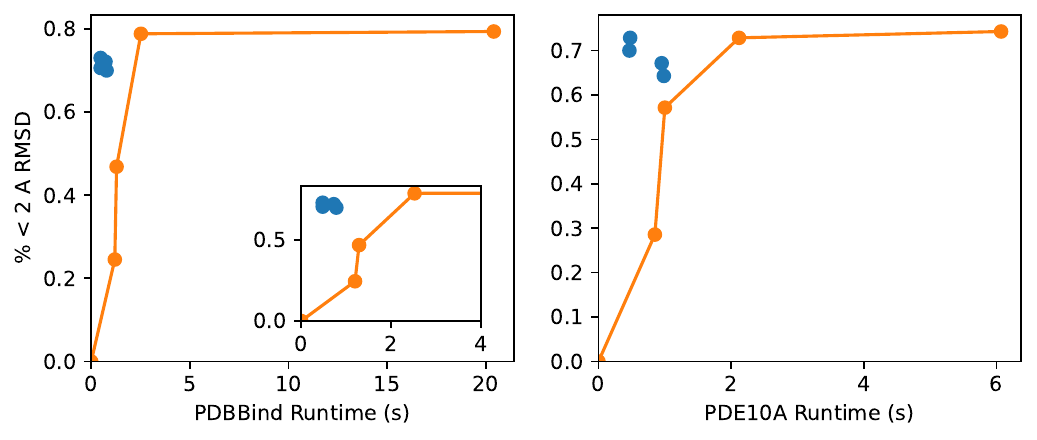}
    \caption{\textbf{Tradeoff between speed and accuracy} using our method compared to Gnina on PDBBind crystal structures (\emph{left}) and PDE10A (\emph{right}). In both cases, variants of our method (blue dots) enable possibilities not reachable with Gnina (orange curve).}
    \label{fig:pareto}
\end{figure}

\end{document}